\newcounter{braid}
\newcounter{strands}
\def\cross{%
  \@ifnextchar^{\message{Got sup}\cross@sup}{\cross@sub}}
\def\cross@sup^#1_#2{\render@cross{#2}{#1}}
\def\cross@sub_#1{\@ifnextchar^{\cross@@sub{#1}}{\render@cross{#1}{1}}}
\def\cross@@sub#1^#2{\render@cross{#1}{#2}}
\def\render@cross#1#2{
  \def\strand{#1}
  \def\crossing{#2}
  \pgfmathsetmacro{\cross@y}{-\value{braid}*\braid@h}
  \pgfmathtruncatemacro{\nextstrand}{#1+1}
  \foreach \thread in {1,...,\value{strands}}
  {
    \pgfmathsetmacro{\strand@x}{\thread * \braid@w}
    \ifnum\thread=\strand
    \pgfmathsetmacro{\over@x}{\strand * \braid@w + .5*(1 - \crossing) * \braid@w}
    \pgfmathsetmacro{\under@x}{\strand * \braid@w + .5*(1 + \crossing) * \braid@w}
    \draw[braid] \pgfkeysvalueof{/tikz/braid start} +(\under@x pt,\cross@y pt) to[out=-90,in=90] +(\over@x pt,\cross@y pt -\braid@h);
    \draw[braid] \pgfkeysvalueof{/tikz/braid start} +(\over@x pt,\cross@y pt) to[out=-90,in=90] +(\under@x pt,\cross@y pt -\braid@h);
    \else
    \ifnum\thread=\nextstrand
    \else
     \draw[braid] \pgfkeysvalueof{/tikz/braid start} ++(\strand@x pt,\cross@y pt) -- ++(0,-\braid@h);
    \fi
   \fi
  }
  \stepcounter{braid}
}
\tikzset{braid/.style={double=\pgfkeysvalueof{/tikz/braid colour},double distance=1pt,line width=2pt,white}}
\newcommand{\braid}[2][]{%
  \begingroup
  \pgfkeys{/tikz/strands=2}
  \tikzset{#1}
  \pgfkeysgetvalue{/tikz/braid width}{\braid@w}
  \pgfkeysgetvalue{/tikz/braid height}{\braid@h}
  \setcounter{braid}{0}
  \let\sigma=\cross
  #2
  \endgroup
}
\newcommand{\Zed}{\mathbb{Z}}
\newtheorem{definition}{Definition}
\newtheorem{lemma}{Lemma}
\newtheorem{corollary}{Corollary}
\newtheorem{theorem}{Theorem}
\def\vec#1{\mbox{\boldmath${#1}$}}
\def\Rom#1{\uppercase\expandafter{\romannumeral#1}}
\newcommand{\inv}[1]{\overline{#1}}
\newcommand{\ew}{\varepsilon}
\newcommand{\atilde}{\raise.17ex\hbox{$\scriptstyle\mathtt{\sim}$}}
\newcommand{\fgr}{\textup{FG}}
\newcommand{\sd}{\stackbin[]{*}{\Rightarrow}}
\newcommand{\rr}{\mathfrak{r}}
\newcommand{\rlem}[1]{Lemma~\ref{#1}}
\newcommand{\rcor}[1]{Corollary~\ref{#1}}
\newcommand{\req}[1]{Equation~(\ref{#1})}
\tikzset{
>=stealth',
help lines/.style={dashed, thick},
axis/.style={<->},
important line/.style={thick},
connection/.style={thick, dotted},
}
\tikzset{elliptic state/.style={draw,ellipse}}
\title{Composition problems for braids: \\Membership, Identity and Freeness}
\author{%
Sang-Ki Ko  \and Igor Potapov
  }
\begin{document}

\maketitle

\begin{abstract}
In this paper we investigate the decidability and complexity
of problems related to braid composition.
While all known problems for a class of braids with three strands, $B_3$,
have polynomial time solutions we prove that
a very natural question for braid composition, the membership problem,
is $\NP$-complete for braids with only three strands.
The membership problem is decidable in $\NP$ for $B_3$, but it becomes harder
for a class of braids with more strands. In particular
we show that fundamental problems about braid compositions
are undecidable for braids with at least five strands, but decidability of these problems for $B_4$ remains open.
Finally we show that the freeness problem for semigroups of braids from $B_3$ is also decidable in $\NP$.

The paper introduces a few challenging algorithmic problems about topological braids 
opening new connections between braid groups, combinatorics on words, complexity theory and 
provides solutions for some of these problems by application of several 
techniques from automata theory, matrix semigroups and algorithms.
\end{abstract}

\section{Introduction}
In this paper we investigate the decidability and complexity for
a number of problems related to braid composition. 
Braids are classical topological objects that attracted a lot of attention
due to their connections to topological knots and links 
as well as their applications to polymer chemistry, molecular biology,
cryptography, quantum computations and robotics \cite{DehornoyDRW08,EpsteinPCHLT92,LomonacoK11}.

The discovery of various cryptosystems based on the braid group 
inspired a new line of research about the complexity analysis
of decision problems for braids, including the word problem, 
the generalized word problem, root extraction problem, 
the conjugacy problem and the conjugacy search problem~\cite{Garber07,Garber10,Mahlburg04,MyasnikovSU05,MyasnikovSU06}.
For many problems the polynomial time solutions were found, but 
it was surprisingly shown by M. S. Paterson and A. A. Razborov in 1991 
that another closely related problem, the {\em non-minimal braid problem}, to be $\NP$-complete \cite{PatersonR91}.\\

{\bf Non-minimal Braid Problem:}
Given a word $\omega$ in the generators  $\sigma_1, \ldots , \sigma_{n-1}$  and their inverses, determine whether there is a 
shorter word $\omega'$   in the same  generators which represents the same element of
the $n$-strand braid group $B_n$?\sloppy\\
%
%
%

The main result of this paper is to show another $\NP$-hard problem
for braids in $B_3$, i.e. with only three strands.  
The problem can be naturally formulated in terms of composition (or concatenation)  of braids which is one of the fundamental operations for the braid group. 

Given two geometric braids, we can compose them, i.e. put one after the other making the endpoints of the first one coincide with the starting points of the second one. There is a neutral element for the composition: it is the trivial braid, also called identity braid, i.e. the class of the geometric braid where all the strings are straight. 
Two geometric braids are isotopic if there is a continuous deformation of the ambient space that
deforms one into the other, by a deformation that keeps every point in the two bordering planes fixed.
\begin{center}
\begin{tikzpicture}
\braid[strands=4,braid start={(-1.2,-0.8)}]
{\sigma_1 \sigma_2^{-1} \sigma_3^{-1}}
\node[font=\large] at (1.2,-1.5) {\(\cdot \)};
\braid[strands=4,braid start={(1,-0.8)}]
{\sigma_2 \sigma_3 \sigma_2}
\node[font=\large] at (3.5,-1.5) {\(= \)};
\braid[strands=4,braid start={(3.7,0)}]
{\sigma_1 \sigma_2^{-1} \sigma_3^{-1} \sigma_2 \sigma_3 \sigma_2 }
\node[font=\large] at (5,-1.5) {\( - - - - - \)};
\node[font=\large] at (6.3,-1.5) {\(\leftrightarrow \)};
\braid[strands=4,braid start={(6.3,0)}]
{\sigma_1 \sigma_2^{-1} \sigma_3^{-1} \sigma_3 \sigma_2 \sigma_3 }
\node[font=\large] at (8.7,-1.5) {\(\leftrightarrow \)};
\braid[strands=4,braid start={(8.7,0)}]
{\sigma_1  \sigma_5  \sigma_5 \sigma_5 \sigma_5 \sigma_3 }
\end{tikzpicture}
\end{center}
In this paper we study several computational problems related to composition of braids:
Given a set of braids with $n$ strands  $B=\beta_1, \dots ,\beta_k \in B_n$. Let us denote a semigroup of braids, generated by $B$ and
the operation of composition, by  $\langle B \rangle$.
\begin{itemize}
\item {\bf Membership Problem.}  Check whether exist a composition of braids from a set $B$  that is isotopic to a given braid $\beta$, i.e. is  $\beta$ in $\langle B \rangle$ ?
\item {\bf Identity Problem.} Check whether exist a composition of braids from a set $B$  that is isotopic to a trivial braid.
\item {\bf Group Problem.} Check whether for any braid $\beta \in {B}$  
 we can construct the inverse of $\beta$ by composition of braids from  $B$, i.e. is a semigroup $\langle B \rangle$ a group?
\item {\bf Freeness Problem.} 
Check whether any two different concatenations of braids from $B$ are not isotopic, i.e. is a semigroup of braids $\langle B \rangle$  free?
\end{itemize}
\begin{center}
  \begin{tabular}{ c | c | c  | c }
    \hline
     & $B_3$ & $B_4$ & $B_5$ \\ \hline
    Membership Problem & $\NP$-complete & ?  & Undecidable \\ 
    Group/Identity Problem & $\NP$ & ?  & Undecidable \\
    Freeness Problem& $\NP$ & ? & Undecidable \\
    \hline
  \end{tabular}
\end{center}

In contrast to many polynomial time problems we show
that the membership problem for $B_3$ is $\NP$-hard\footnote{Note that proposed $\NP$-hardness construction is not directly applicable for the identity problem in $B_3$.} by using a combination of new and
existing encoding techniques from automata theory, group theory, matrix semigroups \cite{BellHP12,BellP12} and algebraic properties of braids \cite{DehornoyDRW08}. 
Then we prove that the membership problem for $B_3$ is decidable in 
$\NP$, which is the first non-trivial case where composition is associative, but it is non-commutative. The main idea of the $\NP$ algorithm is to 
reduce the membership problem for $B_3$ into the emptiness problem 
for context-free valence grammars, which is already known to be an 
$\NP$-complete problem. Note that this improves the first 
decidability result shown in~\cite{Potapov13}.
The membership problem for braids in $B_3$ has a very close 
connection with other non-trivial computational problems in matrix semigroups.
For instance, the braid group~$B_3$ has a close relationship with the modular group~${\rm PSL}(2,\Zed)$ 
since the braid group $B_3$ is the universal central extension of ${\rm PSL}(2,\Zed)$. Recently, the problem of deciding whether a finitely generated 
matrix semigroup in ${\rm PSL}(2,\Zed)$ contains the identity matrix is proven 
to be $\NP$-complete~\cite{BellHP17}.
Note that the proposed $\NP$ algorithm for the membership problem
for $B_3$ was inspired by the work of several authors on
the membership problem for $2\times 2$ matrix semigroups 
\cite{BellHP12,BellP12,ChoffrutK10,GurevichS07}.
We also show that fundamental problems about the braid compositions
such as the identity and freeness problems
are undecidable for braids with at least five strands, but decidability of these problems for $B_4$ remains open.
It is worth mentioning that there is no embedding from a 
set of pairs of words into $B_4$~\cite{Akimenkov91}. 
Hence, these problems might be decidable for $B_4$ since our undecidability proofs for $B_5$ essentially rely on the embedding 
from a set of pairs of words into $B_5$.

Recently, there have been several papers on games on 
braids~\cite{BovykinC06,CarlucciDW11,HalavaHNP17} where one 
player called the {\em attacker} tries to reach the trivial braid and 
the other player called the {\em defender} tries to keep the attacker 
from reaching the trivial braid based on the composition of 
braids from a finite set. Halava et el.~\cite{HalavaHNP17} proved 
that it is undecidable to check for the existence of a winning 
strategy in $B_3$ from a given non-trivial braid and in $B_5$ 
from the trivial braid.

\section{Preliminaries}\label{notSec}
\subsection{Words and Automata} 
Given an alphabet $\Gamma = \{1, 2, \ldots, m\}$, a word $w$ is an element $w \in \Gamma^*$. We denote the concatenation of two words $u$ and $v$ by either $u\cdot v$ or $uv$ if there is no confusion.
For a letter $a \in \Gamma$, we denote by
$\inv{a}$ or $a^{-1}$ the inverse letter of $a$, such that $a\inv{a} = \varepsilon$ where $\varepsilon$ is the empty word. We also denote 
$\inv{\Gamma} = \Gamma^{-1} = \{\inv{1}, \inv{2}, \ldots, \inv{m}\}$ and for a word $w =w_1w_2\cdots w_n$, we denote 
$\inv{w} = w^{-1} = w_n^{-1} \cdots w_2^{-1}\, w_1^{-1}$. 


The free group over a generating set $H$ is denoted by $\fgr(H)$,
i.e., the free group over two elements $a$ and $b$ is denoted as
$\fgr(\{a,b\})$. 
For example, the elements of $\fgr(\{a,b\})$ are all the words
over the alphabet $\{a,b,a^{-1},b^{-1}\}$ that are reduced, i.e.,
that contain no subword of the form $x \cdot x^{-1}$ or $x^{-1}
\cdot x$ (for $x \in \{a,b\}$). Note that $x \cdot x^{-1} = x^{-1} \cdot x = \varepsilon$.

Let $\Sigma = \Gamma \cup \inv{\Gamma}$. Using the notation of \cite{AngPRS09}, we 
shall also introduce a reduction mapping which removes factors of the form $a\inv{a}$ for $a \in \Sigma$. 
To that end, we define the relation
$\vdash \,\subseteq \Sigma^* \times \Sigma^*$ such that for all $w, w' \in \Sigma^*$, $w \vdash w'$ if and only if there exists 
$u,v \in \Sigma^*$ and $a \in \Sigma$ where $w= ua\inv{a}v$ and $w' = uv$. We may then define by $\vdash^*$ the reflexive and transitive 
closure of $\vdash$. 

\begin{lemma}[\cite{AngPRS09}]\label{reducedLem}
For each $w \in \Sigma^*$ there exists exactly one word $r(w) \in \Sigma^*$ such that $w \vdash^* r(w)$ does not contain any factor of the
form $a\inv{a}$, with $a \in \Sigma$.
\end{lemma}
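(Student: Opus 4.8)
The plan is to establish the two halves of the statement — existence and uniqueness of $r(w)$ — separately, treating $\vdash$ as an abstract rewriting system on $\Sigma^*$ and proving that it is both terminating and confluent.

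\emph{Existence.} First I would observe that every single step strictly shortens a word: if $w \vdash w'$ then $w = ua\inv{a}v$ and $w' = uv$, so $|w'| = |w| - 2$. Since word length is a non-negative integer, no infinite chain $w \vdash w_1 \vdash w_2 \vdash \cdots$ can exist, i.e.\ $\vdash$ is terminating. Starting from $w$ and applying reductions as long as any factor $a\inv{a}$ remains, the process must halt, and it can only halt at a word containing no such factor. Hence at least one reduced word $r(w)$ with $w \vdash^* r(w)$ exists.

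\emph{Uniqueness.} The essential content is that the reduced word is independent of the order in which factors are cancelled; I would derive this from local confluence together with termination, via Newman's lemma. To check local confluence, suppose $w \vdash w_1$ and $w \vdash w_2$ arise from deleting two occurrences of factors $a\inv{a}$ and $b\inv{b}$ at positions $i$ and $j$ of $w$. If the two occurrences are disjoint, the deletions act on separate parts of the word and commute: deleting one leaves the other intact (merely shifted), so $w_1$ and $w_2$ both reduce in one further step to the common word obtained by deleting both. If the two occurrences overlap, then since each has length two they must share exactly one position, forcing $j = i+1$ and a local pattern $a\inv{a}a$ (with $b = \inv{a}$, $\inv{b} = a$); but then deleting either redex from $u\,a\inv{a}a\,v$ yields the same word $u\,a\,v$, so already $w_1 = w_2$. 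In all cases $w_1$ and $w_2$ have a common reduct, establishing local confluence. By Newman's lemma, a terminating and locally confluent system is confluent, so any two reduced words reachable from $w$ must coincide. Combined with existence, this proves that $r(w)$ exists and is unique.

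I expect the only delicate point to be the overlap analysis in the local-confluence step, where one must verify that there are no genuine branchings of normal forms; the key observation is that the inverse operation is an involution, $\overline{\inv{a}} = a$, which collapses the single overlapping configuration $a\inv{a}a$ so that both reductions land on the identical word. Everything else — termination and the disjoint case — is routine. As an alternative to invoking Newman's lemma, I could instead define a deterministic left-to-right stack reduction $\rho$ that cancels a letter against the top of a stack whenever they are mutually inverse, prove by induction on a single step that $w \vdash w'$ implies $\rho(w) = \rho(w')$, and note that $\rho$ fixes every reduced word; then any reduced reduct of $w$ equals $\rho(w)$ and is therefore unique. This makes the argument self-contained at the cost of verifying the invariance of $\rho$ under $\vdash$.
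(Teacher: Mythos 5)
Your argument is correct and complete, but note that the paper does not prove this lemma at all: it is imported verbatim from \cite{AngPRS09} as a known fact about reduction to normal form in free groups, so there is no in-paper proof to compare against. Your route --- termination because each step shortens the word by two, local confluence by a case analysis on disjoint versus overlapping redexes, and then Newman's lemma to get global confluence and hence uniqueness of the normal form --- is the standard abstract-rewriting proof of this fact, and you correctly isolate the only delicate point: the overlapping case, where two length-two redexes sharing one position force the pattern $a\inv{a}a$ (using the involution convention $\overline{\inv{a}}=a$, which is implicit in the paper's definition of $\Sigma=\Gamma\cup\inv{\Gamma}$ and needed for the relation $\vdash$ to make sense for $a\in\inv{\Gamma}$), and both deletions already yield the identical word $u\,a\,v$. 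Your alternative via a deterministic left-to-right stack reduction $\rho$ invariant under $\vdash$ is equally valid and is essentially the argument one finds in textbook treatments of free groups. Either version would serve as a self-contained proof where the paper instead relies on a citation; there is no gap.
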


The word $r(w)$ is called the reduced representation of word $w \in \Sigma^*$. As an example, we see that if $w = 132\inv{2}1\inv{1}\,\inv{3}\,\inv{1} \in \Sigma^*$, then
$r(w) = \varepsilon$.
 
Using standard notations, a deterministic finite automaton (DFA) is given by quintuple $(Q, \Sigma', \delta, q_0, F)$ where $Q$ is the set of 
states, $\Sigma'$ is the \emph{input alphabet}, $\delta: Q \times \Sigma' \to Q$ is the \emph{transition function}, $q_0 \in Q$ is the initial
state and $F \subseteq Q$ is the set of final states of the automaton. We may extend $\delta$ in the usual way to have domain 
$Q \times \Sigma'^*$. Given a deterministic finite automaton $A$, the language recognized by $A$ is denoted by $L(A) \subseteq \Sigma'^*$, i.e.
for all $w \in L(A)$, it holds that $\delta(q_0, w) \in F$.

\begin{lemma}\label{power_automata}
For any given $n\in \mathbb{Z}, n \geq 3$ there is a DFA $P_n$ over a group alphabet $\Sigma$, $|\Sigma|=2n$,
with $n+2$ states and $2n$ edges such that the only  word  $w \in L(P_n)$ and $r(w) = \varepsilon$,
has length $|w| = 2^n$.
\end{lemma}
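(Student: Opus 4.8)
The plan is to produce an explicit deterministic automaton $P_n=(Q,\Sigma,\delta,q_0,F)$ and then establish two separate facts about it: that among all accepted words exactly one freely reduces to $\varepsilon$, and that this word has length $2^n$. I would take $Q=\{q_0,q_1,\dots,q_n,q_f\}$, so that $|Q|=n+2$, with $q_0$ initial and $F=\{q_f\}$, and $\Sigma=\{a_1,\dots,a_n\}\cup\{\inv{a_1},\dots,\inv{a_n}\}$, so that $|\Sigma|=2n$. The crucial design constraint is that each of the $2n$ letters labels \emph{exactly one} transition; this makes the $2n$ edges and the determinism automatic, and it forces a strong rigidity, namely that a given letter can be read only from the unique source state of its edge. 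Consequently a run is nothing but a walk in the edge-labelled graph, and $L(P_n)$ is exactly the set of labels of walks from $q_0$ to $q_f$. The transitions would be wired recursively: $P_n$ is obtained from a copy of $P_{n-1}$ by adjoining the state $q_n$ together with the two edges carrying $a_n$ and $\inv{a_n}$, positioned so that a symbol $a_n$ already read can be cancelled only after the level-$(n-1)$ portion of the graph has been fully traversed a second time.

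It is convenient to read the reduction map $r$ of \rlem{reducedLem} as a stack process: scanning $w$ left to right, a letter is pushed unless it is the inverse of the current top symbol, in which case the top is popped, and $r(w)=\varepsilon$ holds precisely when the stack is empty at the end. Using this semantics I would first exhibit the intended accepting walk $w_n$, defined recursively by gluing two copies of $w_{n-1}$ around the new pair $a_n,\inv{a_n}$, and check directly that its label reduces to $\varepsilon$: each half empties the stack, and the outer occurrences of $a_n$ and $\inv{a_n}$ cancel across the junction. Counting the letters read along $w_n$ then yields a recurrence in which each added state doubles the length, and solving it with the base case at the smallest level gives $|w_n|=2^n$. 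This existence-and-length half is the routine, purely computational part.

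The substantial step is uniqueness, i.e.\ showing that no \emph{other} walk from $q_0$ to $q_f$ has empty reduction. I would argue by induction on $n$, exploiting both the stack semantics and the rigidity that comes from each letter labelling a single edge. The heart of the matter is that the placement of $a_n,\inv{a_n}$ forces a canonical order of cancellations: whenever the run reaches the source of the exit edge, the accumulated stack is nonempty unless the lower structure has already been completed the prescribed number of times, so taking the exit prematurely leaves a nonempty reduced word, while every alternative continuation either gets stuck (no outgoing edge carries the required letter) or pushes a symbol that can never be cancelled on any walk that eventually reaches $q_f$. Restricting a hypothetical second $\varepsilon$-reducing solution to the level-$(n-1)$ subgraph and invoking the induction hypothesis then pins it down to $w_n$.

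The main obstacle is exactly this no-shortcut argument. The naive short solutions — above all, cancelling a symbol immediately after it is pushed — must be excluded, and this is precisely what the counter-like recursive wiring is for: the exit letter must never be positioned so as to cancel a recently pushed symbol. The real work therefore lies in choosing the wiring so that it simultaneously uses exactly $2n$ edges and $n+2$ states, admits the walk $w_n$, and admits no competing walk whose label reduces to $\varepsilon$; once that is verified, the length bound $|w|=2^n$ falls out of the recurrence established in the second step.
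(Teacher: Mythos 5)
Your overall strategy is the one the paper uses: a DFA in which every letter of the group alphabet labels exactly one edge (so determinism and the edge count come for free), a doubling recursion in which cancelling the letter that guards level $i$ forces a complete second traversal of the levels below it, and an induction combining the stack semantics of free reduction with the one-edge-per-letter rigidity to show that no competing accepting walk reduces to $\varepsilon$. However, as a proof of an existence statement the proposal has a genuine gap: the automaton is never actually constructed. The sentence ``positioned so that a symbol $a_n$ already read can be cancelled only after the level-$(n-1)$ portion of the graph has been fully traversed a second time'' states the property the wiring must have rather than exhibiting wiring that has it, and your closing paragraph concedes that ``the real work \ldots lies in choosing the wiring'' --- but that choice is the entire content of the lemma. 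The paper commits to a concrete table: $q_0\xrightarrow{1}q_1$, $q_a\xrightarrow{\inv{a}}q_{a+1}$ for $1\le a\le n$, $q_a\xrightarrow{a}q_0$ for $2\le a\le n-1$, plus a final cancelling pair of edges into the accepting state; the edge labelled $a$ that returns to $q_0$ is exactly what forces the full re-traversal, and without fixing this (or an equivalent) the uniqueness induction cannot even be started.

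There is also a quantitative slip. The doubling recurrence for the walk you describe is $|w_{i+1}|=2|w_i|+2$, whose solution is $2^i-2$, not $2^i$; the paper reaches length exactly $2^n$ only by appending two further states and a cancelling pair of extra transitions at the end ($2^n-2+2=2^n$). With your budget of $n+2$ states $\{q_0,\dots,q_n,q_f\}$ and a single edge into $q_f$, the accepted word would have odd length $2^n-1$, and a word of odd length can never freely reduce to $\varepsilon$, so the claim that the bound ``falls out of the recurrence'' fails as stated. (The paper's own construction in fact uses the states $q_0,\dots,q_{n+2}$, i.e.\ $n+3$ of them, so this bookkeeping is already delicate; your plan needs an explicit adjustment here, not just the recurrence.)
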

\begin{proof}
We adapt the proof of a related result over \emph{deterministic finite automata} (DFA) recently shown in \cite{AngPRS09}.
Define alphabets $\Gamma = \{1,2,\ldots,n\}$, $\inv{\Gamma} = \{\inv{1},\inv{2},\ldots,\inv{n}\}$ and 
$\Sigma = \Gamma \cup \inv{\Gamma}$. It is shown in \cite{AngPRS09} that for any $n \geq 3$, there exists a DFA
$A_n$, with $n+1$ states over $\Sigma$, such that for any
word $w \in \Sigma^*$ where $w \in L(A_n)$ and $r(w) = \varepsilon$ then $|w| \geq 2^{n-1}$. Their proof is 
constructive and we shall now show an adaption of it.
Let $Q = \{q_0, \ldots, q_{n+2}\}$ and $q_0$ be the initial state and $\{q_{n+2}\}$ is the final state. 
We define the transition function $\delta: Q \times \Sigma^* \to Q$ of the DFA such that:

$$
\delta(q_a, c) = \left\{ \begin{array}{ll} 	q_1, & \textrm{if } c = 1 \textrm{ and } a=0; \\
							q_{a+1}, & \textrm{if } c = \inv{a} \textrm{ and } 1 \leq a \leq n; \\
							q_0, & \textrm{if  } c = a \textrm{ and } 2 \leq a \leq n-1, \\
                                                        q_{n+2}, & \textrm{if  c = n} \textrm{ and } a=n+1; 
				\end{array} \right.
$$
All other transitions are not defined. The structure of this DFA can be seen in Figure~\ref{fig_2}. 
\begin{figure}[htb]
\begin{center}
\usetikzlibrary{positioning,automata}
\begin{tikzpicture}[->,>=stealth',shorten >=1pt,auto,node distance=2cm,
                    semithick]
  \node[elliptic state,initial]   (q_0)                {$q_0$};
  \node[elliptic state]           (q_1) at (2.5,0) {$q_1$};
  \node[elliptic state] (q_2) at (5,-1.2) {$q_2$};
  \node[elliptic state] (q_3) at (5, -3) {$q_3$};
  \node[elliptic state] (q_4) at (0, -3) {$q_{n-1}$};  
  \node[elliptic state,draw=none] (q_5) at (2.5, -3) {$\cdots$};    
  \node[elliptic state] (q_6) at (-2.5, -3) {$q_{n}$};    
  \node[elliptic state] (q_7) at (-5, -3) {$q_{n+1}$};      
  \node[accepting,elliptic state] (q_8) at (-5, -1) {$q_{n+2}$};        
  \path[->] (q_0) edge                node [above] {$1$} (q_1)
            (q_1) edge    [bend left] node [below] {$\inv{1}$} (q_2)
            (q_2) edge                node [right] {$\inv{2}$} (q_3)
            (q_2) edge                node [below] {$2$} (q_0)            
            (q_3) edge                node [below] {$3$} (q_0)
            (q_3) edge                node [below] {$\inv{3}$} (q_5)            
            (q_5) edge                node [below] {$\inv{n-2}$} (q_4)                        
            (q_4) edge                node [right] {$n-1$} (q_0)
            (q_4) edge                node [below] {$\inv{n-1}$} (q_6)
            (q_6) edge                node [below] {$\inv{n}$} (q_7)
            (q_7) edge                node [right] {$n$} (q_8);
\end{tikzpicture}
\caption{A deterministic finite automaton where the length of minimum non empty word $w$ such that $r(w) = \varepsilon$ is $2^{n}$.}\label{fig_2}
\end{center}
\end{figure}
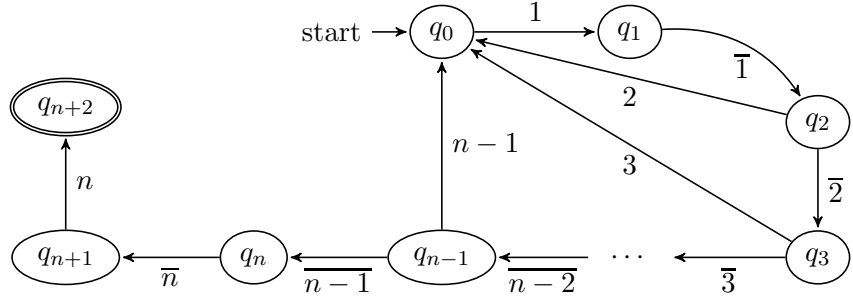
%
%
The only path leading to a state $q_n$, for any $n \geq 3$ 
with an empty reduced word  has length $2^n-2$. 
The path for reaching state $q_2$ with an empty reduced word has  length $2$
and there are no other paths leading to $q_2$ with an empty reduced word.
Let us assume that another path is leading to $q_2$ via a path where 
the larger index of a reachable state on this path is $j$. Then at least one symbol $j$
is not canceled in the reduced word leading to $q_2$. 
Consider a path from $q_i$ to $g_{i+1}$ which corresponds to reduced word $v$
then it should be of the form
$v=i \cdot u  \cdot \inv{i}$ where a word $u$ is an empty word
and it corresponds to a path from a state $q_0$ to $q_i$ otherwise
the reduced word of $v$ is not empty.

Let us assume that the path leading to a state $q_i$  
with an empty reduced word, i.e  $r(w)= \varepsilon$ has  length $2^i-2$.  Then
the path for reaching state $i+1$ with a reduced
word equal to the empty word  can be represented as a path
$w \cdot \inv{i} \cdot u {i}$ where $r(u) = \varepsilon$. 
Since $w$ is the only path to reach $q_i$ from $q_0$ then we have 
the required path has a form $w \cdot \inv{i} \cdot w {i}$
and its length is $(2^i-2) + 1+ (2^i-2)+1= 2^{i+1}-2$.
Finally we add two extra transitions to make the length of a 
path
to be $2^n$.
%
\end{proof}
\begin{lemma}\label{coding}
For any given $s\in \mathbb{Z}$ which has a binary representation of size m, i.e. $m=\lceil log_2(s) \rceil $,
there is a DFA $M_s$ over a group alphabet $\Sigma$, $|\Sigma|=O(m^2)$,
with  $O(m^2)$ states
such that the only  word $w \in L(M_s)$ and $r(w) = \varepsilon$, has a length $|w| = s$.
\end{lemma}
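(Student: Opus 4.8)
The plan is to write $s$ in binary and realise each power of two appearing in that expansion by a copy of the automaton from \rlem{power_automata}, chaining these copies in series over pairwise disjoint alphabets so that the reduced word of a global path is empty exactly when each block is traversed by its own (unique) empty-reducing word.

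Concretely, write $s = \sum_{i \in S} 2^{i}$, where $S \subseteq \{0,1,\dots,m-1\}$ is the set of positions of the $1$-bits of $s$, so that $|S| \le m$. Since any $w$ with $r(w) = \varepsilon$ has even length (each application of $\vdash$ deletes two letters), the target $s$ must be even, i.e. $0 \notin S$; I treat this as part of the hypothesis. For each $i \in S$ with $i \ge 3$ I would take a fresh alphabet $\Sigma_i = \Gamma_i \cup \inv{\Gamma_i}$ with $|\Gamma_i| = i$, disjoint from all the others, and let $P^{(i)}$ be the DFA of \rlem{power_automata} built over $\Sigma_i$: it has $i+2$ states, and its only accepted word $w_i$ with $r(w_i) = \varepsilon$ has length $2^{i}$. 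For the two remaining small exponents $i \in \{1,2\}$, which are not covered by \rlem{power_automata}, I would use explicit constant-size gadgets over fresh disjoint alphabets whose unique empty-reducing accepted word has length $2$ and $4$ respectively, e.g. the chains reading $a_1\inv{a_1}$ and $a_2 b_2 \inv{b_2}\,\inv{a_2}$.

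I would then form $M_s$ by chaining the blocks $\{P^{(i)}\}_{i \in S}$ (in, say, increasing order of $i$): identify the final state of each block with the initial state of the next, take the initial state of the first block as the initial state of $M_s$, and the final state of the last block as its unique final state. By construction every accepted word factors as a concatenation $w = u_{i_1}u_{i_2}\cdots u_{i_{|S|}}$, where $u_{i_j}$ is a word driving block $j$ from its start state to its final state. The key step is to show that $r(w) = \varepsilon$ if and only if $r(u_{i_j}) = \varepsilon$ for every $j$: because the block alphabets are pairwise disjoint, the free group $\fgr(\bigcup_{i \in S}\Gamma_i)$ is the free product of the $\fgr(\Gamma_i)$, and an element of a free product is trivial exactly when each of its free-factor components is trivial; equivalently, no cancellation can ever straddle a block boundary. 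Combined with the uniqueness of each $u_{i_j}$ (from \rlem{power_automata} and the small gadgets), this forces $w$ to be unique, with $|w| = \sum_{j}2^{i_j} = s$, which is the claim.

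Finally I would check the size bounds. There are at most $m$ blocks, the $i$-th contributing $i+2 \le m+1$ states and an alphabet of size $2i \le 2m$; identifying adjacent start/final states when chaining only decreases the state count, so $M_s$ has $\sum_{i \in S}(i+2) + O(1) = O(m^2)$ states and $\sum_{i \in S}2i + O(1) = O(m^2)$ alphabet symbols. The only genuinely delicate point is the cross-block non-cancellation argument, which is exactly where the disjointness of the alphabets (and hence the free-product normal form guaranteed by \rlem{reducedLem} on each factor) is used; everything else is routine bookkeeping.
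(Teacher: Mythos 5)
Your proposal is correct and follows essentially the same route as the paper's own proof: decompose $s$ in binary, instantiate a copy of the automaton of \rlem{power_automata} over a fresh disjoint alphabet for each $1$-bit, chain the copies in series by identifying final and initial states, and use the free-product structure of $\fgr\bigl(\bigcup_i \Gamma_i\bigr)$ to rule out cancellation across block boundaries. The one place you diverge is the low-order bits: the paper patches $2^0$, $2^1$, $2^2$ with ``$\varepsilon$-transition'' chains (which measure path length rather than word length and sit uneasily with the DFA/word-length claim), whereas your explicit two- and four-letter gadgets together with the observation that $r(w)=\varepsilon$ forces $|w|$ even (so $s$ must be even, as it always is in the paper's application to $M_{2s}$) is the cleaner and more defensible treatment.
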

\begin{proof}
Let us represent $s$  as the following power series  
\begin{center}
${\alpha}_m 2 ^m + {\alpha}_{m-1} 2 ^{m-1} + \ldots + {\alpha}_{2} 2^{1} + {\alpha}_1 2 ^{0}$, 
where ${\alpha}_i \in \{0,1\}$.
\end{center}
For each non-zero ${\alpha}_i$ and $i \geq 3$ we will contract the automaton $P_i$ from Lemma~\ref{power_automata}
using unique non-intersecting alphabets for each automaton to avoid any possible cancellation of words between different parts of our final automaton.
Also for non-zero ${\alpha}_1$, ${\alpha}_2$ and ${\alpha}_3$ we define three different automata $P_1$, $P_2$, $P_3$ having 
a linear structure with one $\varepsilon$ transition, two consecutive $\varepsilon$ transitions and four 
consecutive $\varepsilon$ transitions, which will give us paths of length $2^0$, $2^1$ and $2^2$.

Then we will use a resulting set of automata $P_{i_1},  P_{i_2}, \ldots  P_{i_l}$ to build
a single automaton by merging the initial state of $P_{i_t}$ with the  final state of $P_{i_{t+1}}$ for all $t=1 \ldots l-1$
and defining the initial state of $P_{i_1}$ as the initial state of automaton $M_s$ and the final state of $P_{i_l}$
as the final state of $M_s$.
It is easy to see that following the Lemma~\ref{power_automata} each $P_{i_t}$ will reach its own final state having
an empty word iff the number of executed transition is $2^{i_t}$. So finally we build a DFA $M_s$ 
over a group alphabet, such that the only  word $w \in L(M_s)$ and $r(w) = \varepsilon$, has a length $|w| = s$.


The DFA $M_s$ over a group alphabet $\Sigma$,
will have  $|\Sigma|=O(m^2)$,  $O(m^2)$ states and $O(m^2)$ transitions,
 since  there are no more then $m$ parts $P_{i_1},  P_{i_2}, \ldots  P_{i_l}$ 
and each part  $P_{i_t}$  has only $i_t+2$ states.
Moreover the only  word $w \in L(M_s)$ and $r(w) = \varepsilon$, has a length $|w| = s$.
\end{proof}

\subsection{Context-Free Valence Grammar} 

A {\em (context-free) valence grammar} over $\mathbb{Z}^k$ is a context-free 
grammar in which every production has an associated value from $\mathbb{Z}^k$~\cite{FernauS02,Hoogeboom02}. A string in the language of the grammar can be derived in the usual way under the additional constraint that the sum of the associated values of the productions used in the derivation add up to $\vec{0} \in \mathbb{Z}^k$.

Formally, a valence grammar $G$ is specified as a quadruple~$(N,\Sigma,R,S),$ where $N$ is a set of {\em nonterminals}, $\Sigma$ is 
a set of {\em terminals}, $R \subseteq N \times (N \cup T)^* \times \mathbb{Z}^k$ is a set of {\em productions}, and $S \in N$ is the {\em axiom}. For an element~$(A, w, \vec{x}) \in R$, we write $A \xrightarrow{\vec{x}} w$, where $A \to w$ is the underlying production and $\vec{x} \in \mathbb{Z}^k$ is the associated value of the production.

Let $\alpha A \beta$ be a word over $N \cup \Sigma$, where $A \in N$ and $A \xrightarrow{\vec{x}} \gamma \in R$. 
Then, we say that A can be rewritten as $\gamma$ and 
the corresponding derivation step is denoted
 $(\alpha A \beta, \vec{r}) \Rightarrow (\alpha \gamma \beta, \vec{r} + \vec{x})$.
The reflexive, transitive closure of $\Rightarrow$ is denoted by $\sd$ and
 the (context-free) valence language generated
 by $G$ is  $L(G) = \{w \in \Sigma^* \mid(S,\vec{0}) \sd (w, \vec{0})\}$. 

\begin{lemma}\label{lem:valenceNP}
The emptiness problem for context-free valence grammars is $\NP$-complete.
\end{lemma}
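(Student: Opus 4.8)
The plan is to prove the two directions separately: $\NP$-hardness by a reduction from a classical number problem, and containment in $\NP$ by guessing a succinct certificate. The essential observation is that one cannot hope to guess a derivation or a terminal word directly, because (as already witnessed by Lemmas~\ref{power_automata} and~\ref{coding}) the valence constraint can force the shortest valid derivation to have length exponential in the size of the grammar. Instead, the certificate will be the \emph{production-count vector} $\phi \in \Nat^{R}$, recording how many times each production of $R$ is applied; its entries are written in binary, so $\phi$ stays polynomial even when the derived word is exponentially long.

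For $\NP$-hardness I would reduce from \textsc{Subset Sum}: given $a_1,\dots,a_n,t \in \Zed$, decide whether some sub-collection of the $a_i$ sums to $t$. Take $k=1$ and nonterminals $A_0=S,A_1,\dots,A_n$. For each $i$ include two productions $A_{i-1}\xrightarrow{0}A_i$ and $A_{i-1}\xrightarrow{a_i}A_i$ (``skip'' or ``take'' $a_i$), together with the closing production $A_n\xrightarrow{-t}\varepsilon$. Any complete derivation selects exactly one of the two productions at each level, so its total valence equals $\big(\sum_{i\in I}a_i\big)-t$ for the chosen index set $I$, and the single terminal word $\varepsilon$ lies in the language exactly when this value is $0$. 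Hence $L(G)\neq\emptyset$ iff some subset sums to $t$, which establishes $\NP$-hardness already for linear grammars over $\Zed$ with $k=1$.

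For membership in $\NP$ I would use the standard flow characterization of the production multisets that arise from complete derivation trees (the analogue, for parse trees, of the commutative image of a context-free grammar). A vector $\phi\in\Nat^{R}$ is realizable as the production census of some derivation tree rooted at $S$ and yielding only terminals iff two conditions hold: a linear \emph{balance} condition stating, for every nonterminal $A\in N$, that $\sum_{p:\,\mathrm{lhs}(p)=A}\phi_p = [A=S] + \sum_{p}\phi_p\cdot(\text{number of occurrences of }A\text{ in }\mathrm{rhs}(p))$; and a \emph{connectivity} condition stating that every production with $\phi_p>0$ is reachable from $S$ through productions in the support of $\phi$. The valence requirement then contributes the further linear condition $\sum_{p}\phi_p\,\vec{x}_p=\vec{0}$. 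Thus $L(G)\neq\emptyset$ iff the system consisting of the balance equations, the valence equation, and $\phi\ge \vec{0}$ admits a solution whose support is reachable from $S$.

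The algorithm nondeterministically guesses $\phi$ with entries written in binary, and verifies (i)~the balance equations, (ii)~the valence equation $\sum_{p}\phi_p\,\vec{x}_p=\vec{0}$, and (iii)~that $\mathrm{supp}(\phi)=\{p:\phi_p>0\}$ is reachable from $S$; all three checks run in polynomial time (big-integer arithmetic and graph reachability). The main obstacle is to bound the certificate size: since feasibility of an integer linear system is exactly what we are testing, I would invoke the standard small-solution bounds for integer programming (Borosh--Treybig / Papadimitriou), which guarantee that a feasible system with polynomially many equations and polynomially many bits per coefficient has a solution in which every $\phi_p$ uses only polynomially many bits. This makes $\phi$ a polynomial certificate and places the problem in $\NP$; together with the hardness reduction this yields $\NP$-completeness. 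The delicate points I expect to spend the most care on are the correctness (both necessity and sufficiency, in particular ruling out spurious disconnected cycles) of the connectivity condition in the flow characterization, and the precise derivation of the polynomial bit-bound from the integer-programming estimates.
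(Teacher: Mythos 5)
Your proof is correct in substance, but it takes a genuinely different, self-contained route from the paper. The paper disposes of both directions by citation to Chistikov--Haase et al.: $\NP$-hardness follows because a valence grammar with $N=\{S\}$ and right-hand sides in $N\cup\{\varepsilon\}$ is exactly a $\mathbb{Z}$-VAS, whose reachability problem is $\NP$-complete; the upper bound is inherited from the $\NP$ bound for emptiness of context-free commutative grammars with integer counters, since emptiness is insensitive to the order of symbols. You instead re-prove both halves from first principles: a direct Subset Sum reduction (which in fact gives hardness already for linear grammars with a single counter, $k=1$, a slightly sharper statement than the paper's, whose reduction keeps $|N|=1$ but lets the dimension $k$ grow), and for the upper bound the Parikh/flow characterization of production-census vectors plus Borosh--Treybig-style small-solution bounds --- which is essentially the argument hiding inside the cited $\mathbb{Z}$-CFCG result, so the two proofs agree at the conceptual level even though yours carries the full weight. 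Two points deserve the care you already flag: (i) the connectivity condition is not linear, so the small-solution bound cannot be applied to the raw system; you must first fix (or guess) the support $T$ of a witnessing solution, add the constraints $\phi_p\ge 1$ for $p\in T$ and $\phi_p=0$ otherwise, verify that $T$ is reachable from $S$, and only then invoke the integer-programming bound --- otherwise the small solution produced might have a disconnected support; (ii) your reduction admits the empty index set $I=\emptyset$, which is harmless only because the standard Subset Sum instance has positive target $t$; if you state the reduction for arbitrary $t\in\Zed$ you should exclude $t=0$ or add a gadget forcing at least one ``take''. Neither point is a gap in the approach, only in the level of detail.
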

\begin{proof}
It is known that the reachability problem in integer vector addition systems ($\mathbb{Z}$-VAS) is $\NP$-complete~\cite{ChistikovHH16}. Thus, $\NP$-hardness follows 
from the fact that a valence grammar~$G = (N,\Sigma, R, S)$ is a $\mathbb{Z}$-VAS if $R \subseteq N \times (N \cup \{ \varepsilon\}) \times \mathbb{Z}^k$ and $N = \{S\}$.

Moreover, the $\NP$ upper bound of the emptiness problem for 
context-free commutative grammars with integer counters ($\mathbb{Z}$-CFCGs)~\cite{ChistikovHH16} applies to the valence grammars since we can ignore the order of nonterminals and terminals when 
we consider the emptiness of the grammars. 
\end{proof}

\subsection{Braids} 
The braid groups can be defined in many ways including geometric, topological, algebraic and algebro-geometrical definitions \cite{Garside69}.
Here we provide algebraic definition of the braid group.
\begin{definition}\label{Braid_Artrin}
The $n$-strand braid group  $B_n$ is the group given by the presentation with $n-1$
generators $\sigma_1 , \ldots , \sigma_{n-1}$ and the following relations
$\sigma_i \sigma_j = \sigma_j \sigma_i$, for $|i-j| \geq 2$
and $\sigma_i \sigma_{i+1} \sigma_i = \sigma_{i+1} \sigma_i \sigma_{i+1}$
for $1 \leq i \leq n-2$. These relations are called Artrin's relation.
\end{definition}
\begin{definition}
Words in the alphabet $\{ {\sigma}$ , ${\sigma}^{-1} \}$ will be referred to as braid words
\footnote{ Whenever a crossing of strands $i$ and $i + 1$ is encountered, $\sigma_i$ or ${\sigma_i}^{-1}$  
is written down, depending on whether strand $i$ moves under or over strand $i + 1$.}.
\end{definition}

We say that a braid word $w$ is positive if no letter $\sigma_i^{-1}$ occurs in $w$. The positive braids form a semigroup denoted by $B_n^+$.
There is one very important positive braid known as the fundamental $n$-braid, $\Delta_n$.  
The fundamental braid of the group $B_n$ (also known as Garside element) can be written with $\frac{n(n-1)}{2}$ Artin generators as: 
$\Delta_n = (\sigma_{n-1} \sigma_{n-2} \ldots  \sigma_{1})(\sigma_{n-1} \sigma_{n-2} \ldots  \sigma_{2}) \ldots \sigma_{n-1}.$ 

Geometrically, the fundamental braid is obtained by lifting the bottom ends of the identity braid and flipping (right side over left) while keeping the ends of the strings in a line.  The inverse of the fundamental braid $\Delta_n$ is denoted by $\Delta_n^{-1}$.
\begin{center}
\begin{tikzpicture}
\node[font=\large] at (-1,-0.7) {\(\Delta\)};
\node[font=\large] at (-0.5,-0.7) {\(=\)};
\node[font=\large] at (0.0,-0.3) {\(\sigma_1\)};
\node[font=\large] at (0.0,-0.75) {\(\sigma_2\)};
\node[font=\large] at (0.0,-1.2) {\(\sigma_1\)};
\node[font=\large] at (3.8,-0.3) {\(\sigma_2\)};
\node[font=\large] at (3.8,-0.75) {\(\sigma_1\)};
\node[font=\large] at (3.8,-1.2) {\(\sigma_2\)};
\braid[strands=3,braid start={(-0.2,0)}]
{\sigma_1 \sigma_2 \sigma_1}
\node[font=\large] at (1.8,-0.7) {\(=\)};
\braid[strands=3,braid start={(1.8,0)}]
{\sigma_2 \sigma_1 \sigma_2}
\node[font=\large] at (4.8,-0.3) {\(\sigma_1\)};
\node[font=\large] at (4.8,-0.75) {\(\sigma_1^{-1}\)};
\node[font=\large] at (8.8,-0.3) {\(\sigma_2\)};
\node[font=\large] at (8.8,-0.75) {\(\sigma_2^{-1}\)};
\braid[strands=3,braid start={(4.8,0)}]
{\sigma_1  \sigma_1^{-1}}
\node[font=\large] at (6.8,-0.5) {\(=\)};
\braid[strands=3,braid start={(6.8,0)}]
{\sigma_2  \sigma_2^{-1}}
\node[font=\large] at (9.4,-0.5) {\(=\)};
\braid[strands=3,braid start={(9.3,0)}]
{\sigma_4  \sigma_4 }
\end{tikzpicture}
\end{center}

Let $B_3=\{\sigma_1,\sigma_2 | \sigma_1 \sigma_2 \sigma_1 = \sigma_2 \sigma_1 \sigma_2\}$  be the group with three braids.
Let $\Delta$ be the Garside element: $\Delta = \sigma_1 \sigma_2 \sigma_1$. Let $\tau: B_3 \to B_3$ be automorphism
defined by  $\sigma_1 \to \sigma_2$, $\sigma_2 \to \sigma_1$.  It is straightforward to check  that 
\begin{equation}\label{eq:tau}
\Delta \beta = \tau (\beta) \Delta,   \hspace{0.5cm} \Delta^{-1} \beta = \tau (\beta) \Delta^{-1},  \hspace{0.5cm} \beta \in B_3.
\end{equation}
\begin{lemma}[\cite{Orevkov08}]\label{Orevkov08}
Two positive words are equal in $B_3$ if and only if they can be obtained from each other by applying successively the relation 
$ \sigma_1 \sigma_2 \sigma_1 = \sigma_2 \sigma_1 \sigma_2\ $.
A positive word is left or right divisible by $\Delta$ if and only if it contains the subword 
$\sigma_1 \sigma_2 \sigma_1$ or  $\sigma_2 \sigma_1 \sigma_2\ $.
\end{lemma}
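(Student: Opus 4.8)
The plan is to prove the two assertions in turn, deducing the $\Delta$-divisibility criterion from the positive word problem. For the first assertion the ``if'' direction is immediate, since $\sigma_1\sigma_2\sigma_1=\sigma_2\sigma_1\sigma_2$ already holds in $B_3$ by \rdef{Braid_Artrin}. The content is the converse, which is exactly the statement that the positive braid monoid $B_3^+$ (presented by the generators $\sigma_1,\sigma_2$ subject to the single relation $\sigma_1\sigma_2\sigma_1=\sigma_2\sigma_1\sigma_2$) embeds into $B_3$: granting this, equality in $B_3$ of two positive words coincides with equality in $B_3^+$, which by the definition of a monoid presentation means precisely that the two words are joined by a chain of applications of that relation. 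I would obtain the embedding from Ore's theorem. The two hypotheses to verify are (i) that $B_3^+$ is left and right cancellative, and (ii) that any two elements of $B_3^+$ admit a common right multiple. For (ii) I would use that both generators left- and right-divide $\Delta=\sigma_1\sigma_2\sigma_1=\sigma_2\sigma_1\sigma_2$ and that, by \req{eq:tau}, $\Delta$ can be pushed past any positive braid; consequently every element right-divides a suitable power of $\Delta$, so common right multiples always exist. With (i) and (ii) in hand, Ore's theorem embeds $B_3^+$ into its group of right fractions, whose presentation is again $\langle\sigma_1,\sigma_2\mid\sigma_1\sigma_2\sigma_1=\sigma_2\sigma_1\sigma_2\rangle=B_3$, giving $B_3^+\hookrightarrow B_3$.

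For the second assertion I would combine \req{eq:tau} with the rigidity of words containing neither subword. Suppose first that a positive word $w$ contains $\sigma_1\sigma_2\sigma_1$ (the case $\sigma_2\sigma_1\sigma_2$ is symmetric), say $w=u\,\sigma_1\sigma_2\sigma_1\,v=u\,\Delta\,v$. From \req{eq:tau} one gets $u\Delta=\Delta\,\tau(u)$ and $\Delta v=\tau(v)\,\Delta$, hence $w=\Delta\,\tau(u)\,v=u\,\tau(v)\,\Delta$; as $\tau$ preserves positivity, both factorisations are positive and $w$ is divisible by $\Delta$ on either side. Conversely, assume $w$ contains neither subword. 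Then no instance of the defining relation can be applied to $w$, because applying it in either direction requires one of the two subwords to be present; therefore the class of $w$ under the relation is the singleton $\{w\}$. If $w$ were left-divisible by $\Delta$ we would have $w=_{B_3}\Delta\,t=\sigma_1\sigma_2\sigma_1\,t$ for some positive word $t$, and by the first assertion $w$ and $\sigma_1\sigma_2\sigma_1\,t$ would be joined by the relation; since the class of $w$ is $\{w\}$ this forces $\sigma_1\sigma_2\sigma_1\,t=w$, so $w$ would contain $\sigma_1\sigma_2\sigma_1$ after all, a contradiction. The symmetric argument, again via \req{eq:tau}, excludes right-divisibility, so $w$ is $\Delta$-divisible on some side if and only if it contains one of the two subwords.

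The main obstacle is part (i), the cancellativity of $B_3^+$; the remaining ingredients are either formal (the ``if'' directions, the reduction to the embedding, and Ore's theorem) or a one-line computation with \req{eq:tau}. It is worth noting that the easy route to cancellativity fails: orienting the relation as $\sigma_2\sigma_1\sigma_2\to\sigma_1\sigma_2\sigma_1$ gives a terminating but non-confluent rewriting system --- the two overlapping occurrences in $\sigma_2\sigma_1\sigma_2\sigma_1\sigma_2$ reduce to the distinct irreducible words $\sigma_1\sigma_2\sigma_1\sigma_1\sigma_2$ and $\sigma_2\sigma_1\sigma_1\sigma_2\sigma_1$ --- so no finite completion in these generators is available. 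I would therefore derive cancellativity by Garside's word-reversing method, equivalently by establishing that left- and right-divisibility in $B_3^+$ form lattices; this is the genuinely technical step and the classical heart of the cited result.
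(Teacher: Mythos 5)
The paper offers no proof of this lemma at all: it is imported verbatim from the cited reference (and is ultimately Garside's classical analysis of $B_3^+$), so there is no in-paper argument to compare yours against. Your outline is the standard route, and the parts you actually carry out are correct: the reduction of the first assertion to the embedding $B_3^+ \hookrightarrow B_3$, the verification of common right multiples via \req{eq:tau} (every positive word of length $n$ left-divides $\Delta^{n}$, by induction on $n$ using $\sigma_i \preceq \Delta$ and $u\Delta^{m} = \Delta^{m}\tau^{m}(u)$), and the entire second assertion. In particular the converse of the second assertion is handled cleanly: a positive word containing neither $\sigma_1\sigma_2\sigma_1$ nor $\sigma_2\sigma_1\sigma_2$ admits no application of the relation, so its equivalence class is a singleton, and the first assertion then forces any positive word representing the same element of $B_3$ to coincide with it letter for letter. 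Your side remark that the one-way rewriting system is terminating but not confluent is also accurate and usefully explains why a naive normal-form argument cannot replace the monoid embedding.

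The one genuine gap is the left and right cancellativity of $B_3^+$, hypothesis (i) of Ore's theorem. You correctly identify it as the crux, but you do not prove it; you only name two methods (word reversing, the lattice structure of divisibility) by which it could be established. Since everything else in your argument is either formal or a short computation with \req{eq:tau}, the entire mathematical content of the lemma is concentrated in precisely this unproven step, so as written the proposal is a correct and well-organised reduction of the lemma to cancellativity of $B_3^+$ rather than a complete proof. A second, smaller point worth a sentence in a full write-up: Ore's theorem embeds $B_3^+$ into its group of right fractions, and one must still justify that this group of fractions is presented by $\langle \sigma_1,\sigma_2 \mid \sigma_1\sigma_2\sigma_1 = \sigma_2\sigma_1\sigma_2\rangle$, i.e.\ that it coincides with the enveloping group of the monoid; this is standard for Ore monoids but is an assertion, not a tautology.
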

\begin{lemma}[Garside normal form~\cite{Garside69,DehornoyDRW08}]\label{lem:garside}
 - Every braid word $w \in  B_n$ can be written uniquely as
$\Delta^k \beta$, where $k$ is an integer and $\beta$ is a positive braid of which $\Delta$ is not a left divisor.
\end{lemma}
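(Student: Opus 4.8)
The plan is to prove the existence and uniqueness of the Garside normal form $\Delta^k \beta$ separately, relying on the positive-braid divisibility theory that $\Delta$ generates. First I would recall the central structural fact (from Garside's original work) that makes the whole argument go: for any generator $\sigma_i$, the fundamental braid factors as $\Delta = \sigma_i \cdot \beta_i = \beta_i' \cdot \sigma_i$ for suitable positive braids $\beta_i, \beta_i'$, so $\Delta$ is both left- and right-divisible by every generator. Consequently, for any positive braid $p$, the product $\Delta \cdot p$ and $p \cdot \Delta$ remain positive, and more importantly the conjugation relation $\Delta \sigma_i = \sigma_{n-i} \Delta$ (the generalization to $B_n$ of \req{eq:tau}, using the automorphism $\tau$ reversing the generator indices) lets me slide any power of $\Delta$ past a positive word at the cost of applying $\tau$. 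This is the engine for normalizing arbitrary braid words.

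For \emph{existence}, I would take an arbitrary braid word $w \in B_n$, written as a product of the $\sigma_i^{\pm 1}$. The idea is to eliminate negative letters. Each $\sigma_i^{-1}$ can be rewritten as $\Delta^{-1}$ times a positive braid, because $\Delta$ is right-divisible by $\sigma_i$ means $\sigma_i^{-1} = (\text{positive word}) \cdot \Delta^{-1}$; using the sliding relation I can then collect all the $\Delta^{-1}$ factors to the far left, converting them into a single power $\Delta^{-m}$ multiplied by an all-positive remainder $p$. This yields $w = \Delta^{-m} p$ with $p$ positive. If $\Delta$ happens to left-divide $p$, I peel off copies of $\Delta$ one at a time — using \rlem{Orevkov08} in the $B_3$ case (where left-divisibility by $\Delta$ is detected combinatorially by the presence of the subword $\sigma_1\sigma_2\sigma_1$ or $\sigma_2\sigma_1\sigma_2$), and the analogous divisibility test in general — writing $p = \Delta^j \beta$ with $j$ maximal, so that $\Delta$ is no longer a left divisor of $\beta$. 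Setting $k = j - m$ gives $w = \Delta^k \beta$ in the required form. This peeling terminates because each removal strictly decreases the Artin-generator length of the positive remainder.

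For \emph{uniqueness}, suppose $\Delta^{k}\beta = \Delta^{k'}\beta'$ with neither $\beta$ nor $\beta'$ left-divisible by $\Delta$. Without loss of generality assume $k \le k'$; canceling $\Delta^k$ (valid since $B_n$ is a group) gives $\beta = \Delta^{k'-k}\beta'$. If $k' - k > 0$ this exhibits $\Delta$ as a left divisor of $\beta$, contradicting the defining property of the normal form; hence $k = k'$, and then $\beta = \beta'$ by left-cancellation. The one subtlety is justifying that $\Delta^{k'-k}\beta'$ really is left-divisible by $\Delta$ when $k' - k \ge 1$ and $\beta'$ is positive, which is immediate since $\Delta \cdot (\Delta^{k'-k-1}\beta')$ is a genuine factorization with positive cofactor.

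The main obstacle I expect is \emph{well-definedness of the maximal power $j$} in the existence step, i.e.\ showing that the process of stripping off $\Delta$'s from a positive braid is confluent and yields a canonical positive remainder independent of the order of reductions. This rests on the fact that the positive braid monoid $B_n^+$ admits least common multiples and greatest common divisors (the lattice structure of Garside theory), so that there is a well-defined maximal $\Delta$-power dividing any positive braid; I would cite this standard property rather than reprove it, and in the three-strand setting I can instead lean directly on the explicit combinatorial criterion of \rlem{Orevkov08} together with the confluence of the single Artin relation $\sigma_1\sigma_2\sigma_1 = \sigma_2\sigma_1\sigma_2$ on positive words.
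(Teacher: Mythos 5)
The paper does not actually prove this lemma: it is imported as a classical background result, attributed to Garside and to Dehornoy et al., so there is no in-paper argument to compare yours against. Judged on its own, your sketch is the standard Garside argument and is essentially correct: existence by converting each $\sigma_i^{-1}$ into (positive word)$\cdot\Delta^{-1}$ via the factorizations $\Delta=\sigma_i\beta_i=\beta_i'\sigma_i$, sliding the $\Delta^{\pm1}$ factors leftward with the automorphism $\tau$ (the $B_n$ analogue of \req{eq:tau}), and then stripping the maximal $\Delta$-power from the positive remainder; uniqueness by cancelling $\Delta^{\min(k,k')}$ and observing that a surplus power of $\Delta$ would make $\Delta$ a left divisor of the residual positive part. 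The termination argument via the exponent-sum length is sound because all defining relations of $B_n$ are homogeneous.

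The one point you should make explicit rather than wave at is the role of Garside's embedding/cancellation theorem for the positive monoid: ``$\Delta$ is not a left divisor of $\beta$'' is a statement about the monoid $B_n^+$, whereas your uniqueness step cancels in the group and produces an equality $\beta=\Delta^{k'-k}\beta'$ of group elements. To conclude that this exhibits a monoid-level left divisor (and, symmetrically, to know in the existence step that the maximal $j$ with $\Delta^j$ left-dividing $p$ is well defined and independent of the rewriting order), you need that $B_n^+$ embeds in $B_n$ and is left- and right-cancellative with gcd's --- the lattice structure you cite at the end. You do flag this as the main obstacle and correctly note that in $B_3$ it can be replaced by the explicit criterion of \rlem{Orevkov08}, so the gap is one of emphasis rather than substance; for general $n$ the clean fix is simply to define ``left divisor'' by ``$\Delta^{-1}\beta$ is a positive braid,'' after which your uniqueness argument closes with no further input.
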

%
%

Two braids are isotopic if their braid words can be translated one into each other via the relations from the Definition~\ref{Braid_Artrin}
plus the relations $\sigma_i \sigma_i^{-1}  = \sigma_i^{-1} \sigma_i =1$, where $1$ is the identity (trivial braid).
%

Let us define a set of natural problems for semigroups and groups in the context of braid composition.
Given a finite set of braids $B$, a multiplicative semigroup $\langle B \rangle$ is a set of braids that can be generated by any finite composition of braids from $B$.

The {\em membership problem} asks,
given a braid  $\beta \in B_n$ and a finite set of braids $B \subseteq B_n$,
whether there exists a composition  $Y_{1}Y_{2}\cdots Y_{r}$, with each $Y_{i} \in B$ such that 
$Y_{1}Y_{2}\cdots Y_{r} = \beta$. In other words, is $\beta \in \langle B \rangle$?
In the membership problem, when braid $\beta$ is the trivial braid, we call this problem the {\em identity problem.} 
The identity problem for semigroups is a well-known challenging problem 
which is also computationally equivalent to another fundamental problem in
group theory called the {\em group problem}. The problem is, given a finitely generated semigroup $S$, to decide whether a subset of the generator of S generates a non-trivial group~\cite{ChoffrutK10}.
Finally, the {\em freeness problem} is to decide whether the given semigroup of braids $\langle B \rangle$ 
is free. 

%
%


\section{Membership Problem in the Braid Group~$B_3$ is $\NP$-complete }

In this section, we show that the membership problem for braids in the 
braid group~$B_3$ is $\NP$-complete. We also prove that the freeness problem 
in $B_3$ can be decided in $\NP$ based on the $\NP$ algorithm for the membership 
problem.

\subsection{$\NP$-hardness of the Membership Problem in $B_3$}

First we show that the membership problem is $\NP$-hard for braids in $B_3$.
Our reduction will use the the the subset sum problem which is a famous 
$\NP$-complete problem. In the subset sum problem, we are given a positive integer $x$ and a 
finite set of positive integer values $S = \{s_1, s_2, \ldots, s_k\}$ and asked whether 
there exists a nonempty subset of $S$ which sums to $x$.

We will require the following encoding between words over an arbitrary group alphabet and a binary group alphabet, which is well known from the literature.

\begin{lemma}\label{groupEnc} Let $\Sigma' = \{z_1, z_2, \ldots, z_l\}$ be a group alphabet and $\Sigma_2=\{c,d,\inv{c},\inv{d}\}$ be a binary 
group alphabet. Define the mapping $\alpha:\Sigma' \to \Sigma_2^*$ by:
\begin{center}
$\alpha(z_i) = c^i d\inv{c}^{i}, \alpha(\inv{z_i}) = c^i\inv{d}\inv{c}^{i}, $
\end{center}
where $1 \leq i \leq l$. Then $\alpha$ is a monomorphism~\footnote{A monomorphism is an injective homomorphism.} (see \cite{BirgetM08} for more details). Note that $\alpha$ can be extended to
domain $\Sigma'^*$ in the usual way.
\end{lemma}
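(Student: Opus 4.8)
The plan is to exploit that $\fgr(\Sigma')$ is the free group on the generators $z_1,\dots,z_l$, so that a homomorphism out of it is completely determined by, and can be freely specified on, these generators. First I would verify that $\alpha$ is a well-defined homomorphism. Since $\alpha(z_i)=c^i d\,\inv{c}^{\,i}$, the only consistency condition to check is that it respects inverses, and indeed $\alpha(z_i)^{-1}=(c^i d\,\inv{c}^{\,i})^{-1}=c^i\inv{d}\,\inv{c}^{\,i}=\alpha(\inv{z_i})$, which matches the stated definition. Thus $\alpha$ extends uniquely to a homomorphism $\fgr(\Sigma')\to\fgr(\{c,d\})$, and the content of the lemma is injectivity.

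For a free domain, injectivity is equivalent to showing that every \emph{nonempty reduced} word is sent to a nontrivial element. So I would take an arbitrary reduced word $w=z_{i_1}^{e_1}z_{i_2}^{e_2}\cdots z_{i_n}^{e_n}$ with each $e_j\in\{+1,-1\}$, where reducedness means there is no index $j$ with simultaneously $i_j=i_{j+1}$ and $e_j=-e_{j+1}$. Applying $\alpha$ and letting the conjugating powers of $c$ telescope across adjacent blocks (using $\inv{c}^{\,i_j}c^{i_{j+1}}=c^{\,i_{j+1}-i_j}$) yields
\[
\alpha(w)=c^{\,i_1}\,d^{e_1}\,c^{\,i_2-i_1}\,d^{e_2}\,c^{\,i_3-i_2}\cdots d^{e_n}\,c^{-i_n}.
\]
The goal is then to show this word does not reduce to $\varepsilon$ in $\fgr(\{c,d\})$.

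The main obstacle, and the crux of the argument, is the reduction bookkeeping in the two-generator free group: I must rule out the possibility that all the $d$-letters cancel, even after intervening $c$-powers collapse and bring $d$'s together. The key observation is that a $d^{e_j}$ can only meet its neighbour $d^{e_{j+1}}$ when the intermediate exponent $i_{j+1}-i_j$ vanishes, i.e. when $i_j=i_{j+1}$; but in that case reducedness of $w$ forces $e_j=e_{j+1}$, so the two letters \emph{merge} into $d^{2e_j}$ rather than annihilate. Collapsing every such zero-power junction therefore transforms $\alpha(w)$ into an alternating word $c^{b_0}d^{m_1}c^{b_1}\cdots d^{m_p}c^{b_p}$ in which all internal $c$-exponents $b_1,\dots,b_{p-1}$ are nonzero and each $m_t\neq 0$, with $p\ge 1$ since $w$ is nonempty. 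Such a word is already in reduced form and is nonempty, so $\alpha(w)\neq\varepsilon$. This establishes injectivity and hence that $\alpha$ is a monomorphism; the extension to $\Sigma'^{*}$ is then immediate.
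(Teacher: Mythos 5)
Your proof is correct. Note that the paper itself does not prove this lemma at all --- it simply cites the literature (\cite{BirgetM08}) for the fact that the conjugate encoding $z_i \mapsto c^i d\,\inv{c}^{\,i}$ is a monomorphism --- so your argument is a self-contained replacement rather than a variant of an argument in the paper. What you give is the standard normal-form proof: reduce injectivity of a map between free groups to showing that nonempty reduced words have nontrivial image, telescope the conjugating $c$-powers, and observe that the only way two $d$-letters can become adjacent is when $i_j=i_{j+1}$, in which case reducedness of the source word forces $e_j=e_{j+1}$, so the $d$'s merge with the same sign instead of cancelling; the resulting alternating word with nonzero exponents is reduced and nonempty. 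This is exactly the right mechanism, and the one detail worth being explicit about (that after merging a run of equal indices the $d$-exponent is a sum of equal-signed $\pm 1$'s and hence nonzero, and that the surviving internal $c$-exponents are nonzero by construction) is present in your write-up. The verification that $\alpha(z_i)^{-1}=\alpha(\inv{z_i})$, needed for $\alpha$ to be well defined on the group alphabet, is also correct. The only cosmetic remark: the outer exponents $b_0=i_1$ and $b_p=-i_n$ are in fact also nonzero since $i\ge 1$, though your argument does not need this.
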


\begin{lemma}[\cite{BezverkhnijD99}]\label{BraidsEnc}
Let $\Sigma_2=\{c,d,\inv{c},\inv{d}\}$ be a binary group alphabet and define $f: \Sigma_2^* \to {B}_3$ by:
$
f(c) = {{\sigma}_1}^4,
f(\inv{c}) = {{\sigma}_1}^{-4},
f(d) = {{\sigma}_2}^4,
f(\inv{d}) = {{\sigma}_2}^{-4}.
$
Then mapping $f$ is a monomorphism.
\end{lemma}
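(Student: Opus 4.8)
The map $f$ is a homomorphism by the universal property of the free group $\Sigma_2^*=\fgr(\{c,d\})$, so the entire content of the lemma is \emph{injectivity}; equivalently, I want to show that $\sigma_1^4$ and $\sigma_2^4$ freely generate a free subgroup of $B_3$ of rank two. The plan is to compose $f$ with a convenient linear representation of $B_3$ and then settle the freeness of the resulting matrix pair by a ping-pong argument.

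First I would use the standard homomorphism $\rho\colon B_3 \to \SL(2,\Zed)$ determined by
\[
\rho(\sigma_1) = \begin{pmatrix} 1 & 1 \\ 0 & 1 \end{pmatrix}, \qquad
\rho(\sigma_2) = \begin{pmatrix} 1 & 0 \\ -1 & 1 \end{pmatrix},
\]
checking that it is well defined by verifying the single braid relation $\rho(\sigma_1)\rho(\sigma_2)\rho(\sigma_1) = \rho(\sigma_2)\rho(\sigma_1)\rho(\sigma_2)$ (this reflects the fact noted earlier that $B_3$ is the universal central extension of $\PSL(2,\Zed)$). The representation $\rho$ is itself \emph{not} injective, but its kernel is contained in the centre of $B_3$, which will turn out to be irrelevant.

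Next I would compute the images of the two generators of the target subgroup,
\[
X := \rho(f(c)) = \rho(\sigma_1^4) = \begin{pmatrix} 1 & 4 \\ 0 & 1 \end{pmatrix}, \qquad
Y := \rho(f(d)) = \rho(\sigma_2^4) = \begin{pmatrix} 1 & 0 \\ -4 & 1 \end{pmatrix},
\]
and then establish the key step, namely that $X$ and $Y$ freely generate a free group of rank two. For this I would invoke the ping-pong lemma for the action on $\mathbb{R}^2$, using the two cones $P_1 = \{(x,y) : |x| > |y|\}$ and $P_2 = \{(x,y) : |x| < |y|\}$. A direct estimate shows $X^n(P_2) \subseteq P_1$ and $Y^n(P_1) \subseteq P_2$ for every nonzero integer $n$: if $|y| > |x|$ then $|x + 4ny| \geq 4|y| - |x| > |y|$, and symmetrically for $Y$. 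The inequalities close precisely because the off-diagonal entries have absolute value $4 \geq 2$; this is the classical Sanov-type situation.

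The ping-pong conclusion then says that the homomorphism $\fgr(\{c,d\}) \to \SL(2,\Zed)$ sending $c \mapsto X$ and $d \mapsto Y$ is injective. Since this homomorphism is exactly $\rho \circ f$, injectivity of $\rho \circ f$ forces injectivity of $f$, and the lemma follows. The only genuine obstacle in the argument is the freeness of the matrix pair $\{X,Y\}$; once the two ping-pong cones are chosen, the rest reduces to the routine verifications that $\rho$ respects the braid relation and that the generator images are as claimed.
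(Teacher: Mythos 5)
Your proposal is correct, but it is worth noting that the paper does not actually prove this lemma at all: it is imported wholesale as a citation to \cite{BezverkhnijD99}, so any comparison is between your self-contained argument and an external reference. Your route is the classical one and it works. The representation $\rho\colon B_3\to\SL(2,\Zed)$ with $\rho(\sigma_1)=\left(\begin{smallmatrix}1&1\\0&1\end{smallmatrix}\right)$, $\rho(\sigma_2)=\left(\begin{smallmatrix}1&0\\-1&1\end{smallmatrix}\right)$ does satisfy the single defining relation of $B_3$ (both triple products equal $\left(\begin{smallmatrix}0&1\\-1&0\end{smallmatrix}\right)$), so it is well defined; the images $X=\left(\begin{smallmatrix}1&4\\0&1\end{smallmatrix}\right)$ and $Y=\left(\begin{smallmatrix}1&0\\-4&1\end{smallmatrix}\right)$ are correct; and the Sanov-type ping-pong with the cones $|x|>|y|$ and $|x|<|y|$ closes because the off-diagonal entries have absolute value $4\ge 2$ (both cyclic factors are infinite, and the cones are disjoint and nonempty, so the hypotheses of the two-subgroup ping-pong lemma are met). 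Injectivity of $\rho\circ f$ on $\fgr(\{c,d\})$ then trivially forces injectivity of $f$, and your observation that the (central) kernel of $\rho$ is irrelevant is accurate. Two minor points of hygiene: the lemma's phrasing ``$f\colon\Sigma_2^*\to B_3$ is a monomorphism'' must be read, as you implicitly do, as injectivity of the induced map on the free group $\fgr(\{c,d\})$ rather than on the free monoid $\Sigma_2^*$ (on the monoid it is obviously not injective, since $f(c\inv{c})=f(\varepsilon)$); and your argument in fact proves the stronger statement that $\sigma_1^2,\sigma_2^2$ already generate a free group, which makes clear that the fourth powers in the paper are a choice of convenience (matching the $B_5$ encoding of \rlem{product}) rather than a necessity. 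What your approach buys over the paper's bare citation is a short, verifiable proof from first principles; what it costs is nothing in this instance, since the $\SL(2,\Zed)$ quotient is faithful up to the centre and the centre never obstructs injectivity of a composite.
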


The above two morphisms give a way to map words from an arbitrary sized alphabet into the set 
braid words in $B_3$.
We will later require the following corollary concerning mappings $f$ and $\alpha$ to allow us to argue
about the size of braid words constructed by $f \circ \alpha$.

\begin{corollary}\label{matrixComp}
Let $\alpha$ and $f$ be mappings as defined in Lemma~\ref{groupEnc} and Lemma~\ref{BraidsEnc}, then:
$$f(\alpha(z_j)) = f(c^j d\inv{c}^{j}) = {{\sigma}_1}^{4j} {{\sigma}_2}^4 {{\sigma}_1}^{-4j}
$$
and the length of a braid word from $B_3$ corresponding to the symbol $z_j \in  \Sigma'$  is $8j+4$.
\end{corollary}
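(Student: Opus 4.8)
The plan is to compute $f(\alpha(z_j))$ directly by unwinding the two definitions and invoking the homomorphism property. First I would recall from \rlem{groupEnc} that $\alpha(z_j) = c^j d\inv{c}^{j}$, so that the element to be mapped into $B_3$ is a fixed word over the binary alphabet $\Sigma_2$.

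Next, since $f$ is a homomorphism by \rlem{BraidsEnc}, it distributes over concatenation, giving $f(c^j d\inv{c}^{j}) = f(c)^j\, f(d)\, f(\inv{c})^{j}$. Substituting the generator values $f(c) = \sigma_1^4$, $f(d) = \sigma_2^4$ and $f(\inv{c}) = \sigma_1^{-4}$ yields $(\sigma_1^4)^j\, \sigma_2^4\, (\sigma_1^{-4})^j = \sigma_1^{4j}\sigma_2^4\sigma_1^{-4j}$, which is exactly the claimed braid word. This settles the first assertion of the corollary.

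Finally, for the length I would simply count the Artin generators (together with their inverses) appearing in $\sigma_1^{4j}\sigma_2^4\sigma_1^{-4j}$: the block $\sigma_1^{4j}$ contributes $4j$ letters, the block $\sigma_2^4$ contributes $4$, and the block $\sigma_1^{-4j}$ contributes a further $4j$, for a total of $8j+4$. There is no genuine obstacle here — the result is an immediate arithmetic consequence of the homomorphism property — and the only point requiring a word of care is that the length is measured as the raw number of generators in the braid word $f(\alpha(z_j))$ rather than in any reduced or Garside normal form. In particular no cancellation between the $\sigma_1^{4j}$ and $\sigma_1^{-4j}$ blocks is performed, which is legitimate because these blocks are separated by the nontrivial factor $\sigma_2^4$.
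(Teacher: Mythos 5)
Your proposal is correct and matches the paper's (implicit) reasoning exactly: the corollary is stated without proof precisely because it follows immediately by applying the homomorphism $f$ letter-by-letter to $\alpha(z_j)=c^j d\inv{c}^{j}$ and counting the $8j+4$ resulting generators, which is what you do. Your added remark that the length is the raw letter count with no cancellation (the $\sigma_1^{4j}$ and $\sigma_1^{-4j}$ blocks being separated by $\sigma_2^4$) is a sensible clarification consistent with how the corollary is used later.
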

%
%
%
%
%
Now we prove that the membership problem for braid semigroups in $B_3$ 
is $\NP$-hard.

\begin{lemma}\label{lem:nphard}
The membership problem is $\NP$-hard for braids from $B_3$ 
\end{lemma}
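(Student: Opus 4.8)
The plan is to reduce the subset sum problem to the membership problem in $B_3$. Given an instance consisting of a target $x$ and a set $S = \{s_1, \ldots, s_k\}$ of positive integers, I want to build a finite set of braids $B \subseteq B_3$ and a target braid $\beta \in B_3$ so that $\beta \in \langle B \rangle$ if and only if some nonempty subset of $S$ sums to $x$. The natural idea is to encode each integer $s_i$ as a braid whose ``length'' or whose abelianized exponent records the value $s_i$, and to let the semigroup composition correspond to choosing a subset. First I would associate to each $s_i$ a word over an auxiliary group alphabet and then push it into $B_3$ via the composite monomorphism $f \circ \alpha$ from Lemma~\ref{groupEnc} and Lemma~\ref{BraidsEnc}, so that distinct subsets give distinct braids and the braid equality in $B_3$ faithfully reflects word equality in the free group.

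The key obstacle is that subset sum is about addition of integers, whereas braid composition is noncommutative concatenation; a plain free-monoid encoding would let the semigroup generate any ordered product, not just sums, and the target braid only constrains the total, not the order. To control this I would use a \emph{positional} encoding: represent the target $x$ and the partial sums in a unary or a carefully padded form so that the only way to assemble $\beta$ is to reach exactly the value $x$. Here Lemma~\ref{coding} is exactly the tool I expect to need --- it lets me realize a reduced word of a prescribed length $s$ using a DFA (hence a braid word) of size only polynomial in $\log s$, which is essential to keep the reduction polynomial when the integers $s_i$ are given in binary. Concretely, each generator braid would contribute a block of length $s_i$ that must be ``consumed'' against a target block of length $x$, and the membership question becomes whether some combination of the $s_i$-blocks reduces to exactly cancel the $x$-block.

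The main engineering step, and the hardest part, is enforcing the subset-sum semantics rather than allowing repetitions or arbitrary interleavings. I would introduce distinct ``marker'' symbols (via non-intersecting alphabets, as in the proof of Lemma~\ref{coding}) that tag each generator so that each $s_i$ can be used at most once and so that a generator can only be applied in the correct position of the product. The reduced-word uniqueness guaranteed by Lemma~\ref{reducedLem}, together with the injectivity of $f \circ \alpha$, ensures that no unintended cancellations occur between different blocks and that the braid $\beta$ equals the identity-plus-target pattern precisely when the chosen markers line up into one valid subset whose lengths total $x$. I expect to verify two directions: if a subset of $S$ sums to $x$ then the corresponding ordered product of generators equals $\beta$ in $B_3$; and conversely, any composition equal to $\beta$ must, by the marker structure and reduced-word uniqueness, correspond to such a subset. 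Finally I would check that the whole construction (number of generators, alphabet size, and braid-word lengths) is polynomial in the binary size of the input, which follows from the logarithmic-size encoding of Lemma~\ref{coding} and Corollary~\ref{matrixComp}.
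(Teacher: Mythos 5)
Your overall architecture matches the paper's: a reduction from subset sum, a linear chain of ``marker'' letters $1,2,\ldots,k+2$ forcing each $s_i$ to be considered exactly once and in order, the succinct DFA encoding of Lemma~\ref{coding} to keep the reduction polynomial in the binary size of the $s_i$, and the composite monomorphism $f\circ\alpha$ of Lemma~\ref{groupEnc} and Lemma~\ref{BraidsEnc} to push the resulting group-alphabet words into $B_3$. However, there is a genuine gap at the heart of the construction: the mechanism by which the integer values are actually carried into the braid and \emph{summed}. You propose that each generator contributes ``a block of length $s_i$'' to be consumed against ``a target block of length $x$,'' but this does not fit together with Lemma~\ref{coding}: that lemma produces a DFA whose unique accepting word with trivial reduction has length $s$, i.e.\ the word \emph{reduces to the empty word}. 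The value $s$ is visible only as the number of transitions traversed, not as a residual block in the reduced word, so after reduction the path through $M_s$ carries no trace of $s$ into the final braid. Conversely, if you instead encode $s_i$ as a literal block $c^{s_i}$ over a free generator, you lose the polynomial bound that Lemma~\ref{coding} was supposed to provide.

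The paper's resolution is the one braid-specific idea your proposal is missing: tag every transition of each $M_{2s_i}$ (resp.\ $M_{2x}$) with one copy of the \emph{central} element $\Delta^2$ (resp.\ $\Delta^{-2}$) of $B_3$. Because $\Delta^2$ commutes with everything, all these factors can be collected to one side of any product; the number of collected factors equals the total number of DFA transitions traversed, which Lemma~\ref{coding} forces to be exactly $2s_i$ per chosen block. The candidate product then equals the target braid $f(\alpha(1))\,f(\alpha(\overline{k+2}))$ if and only if the marker word freely reduces to $1\cdot\overline{k+2}$ (forcing a valid path through the chain) \emph{and} the accumulated power of $\Delta$ vanishes, i.e.\ $\sum s_{i_j}=x$. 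Your appeal to ``injectivity of $f\circ\alpha$'' and ``reduced-word uniqueness'' is not enough here, because once $\Delta^{\pm 2}$ is inserted the words leave the free subgroup that $f$ embeds, and one must additionally use that $\langle\Delta^2\rangle$ is the (infinite cyclic) center of $B_3$ meeting that free subgroup trivially, so that equality in $B_3$ splits into equality of the free part and of the central exponent. Without identifying this central counter, the reduction as you describe it either fails to record the values $s_i$ at all or fails to be polynomial.
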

\begin{proof}
We shall use an encoding of the {\em subset sum problem} (SSP) into a set of braids from $B_3$. Define an alphabet
$\Sigma = \Sigma' \cup \{ \Delta , \inv{\Delta} \}, \Sigma'= \{1, 2, \ldots, k+2, \inv{1}, \inv{2}, \ldots, \inv{k+2} \}$
that will be extended during the construction. 

We now define a set of words $W$ which will encode the SSP instance. Note that the length of words in the following set is not bounded by a polynomial of the size of the SSP instance, however this is only a transit step
and will not cause a problem in the final encoding. In particular the unary representation of a number $s$ by 
a word ${\Delta}^{2s}$ will be substituted by a set of words of a polynomial size of $i,j$ and $s$
that will generate a unique word  $i \cdot {\Delta}^{2s} \cdot j$.
$$W = 
\begin{array}{ll}
\{1 \cdot {\Delta}^{2s_1} \cdot \inv{2}, & 1 \cdot \ew \cdot \inv{2}, \\
2 \cdot {\Delta}^{2s_2} \cdot \inv{3}, & 2 \cdot \ew \cdot \inv{3}, \\
\vdots & \vdots \\
k \cdot {\Delta}^{2s_k} \cdot \inv{(k+1)}, \,\,\, & \,\,\, k \cdot \ew \cdot\inv{(k+1)}, \\
(k+1) \cdot \inv{{\Delta}}^{2x} \cdot \inv{(k+2)} \}  \subseteq \Sigma^* & \\
%
\end{array}
$$
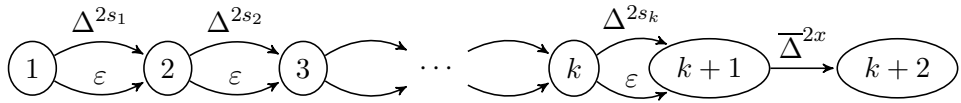
\begin{figure}[htb]
\begin{center}
\begin{tikzpicture}[->,>=stealth',shorten >=1pt,auto,node distance=1.8cm,
                    semithick]

  \node[elliptic state] (A)                  {1};
  \node[elliptic state]         (B) [right of=A] {2};
  \node[elliptic state]         (C) [right of=B] {3};
  \node[elliptic state,draw=none]         (C2) [right of=C] {$\cdots$};  
  \node[elliptic state]         (D) [right of=C2] {$k$};
    \node[elliptic state]         (E) [right of=D] {$k+1$};
        \node[elliptic state]         (F) at (11.5,0) {$k+2$};
  \path (A) edge     [bend left]         node {$\Delta^{2s_1}$} (B)
   edge     [bend right]         node {$\varepsilon$} (B)
(B) edge     [bend left]         node {$\Delta^{2s_2}$} (C)
   edge     [bend right]         node {$\varepsilon$} (C)
(C) edge     [bend left]         node {} (C2)
   edge     [bend right]         node {} (C2)   
(C2) edge     [bend left]         node {} (D)
   edge     [bend right]         node {} (D)      
(D) edge     [bend left]         node [above] {$\Delta^{2s_{k}}$}  (E)
   edge     [bend right]         node [above] {$\varepsilon$} (E)
(E) edge              node {$\inv{\Delta}^{2x}$} (F);
\end{tikzpicture}
\caption{The initial structure of a product which forms the identity on labels.}\label{fig_1}
\end{center}
\end{figure}
Figure~\ref{fig_1} shows the way in which the words of $W$ can be combined to give the identity for the reduced word on labels
in the graph structure. 
The above assumption will mean that
we start from node~$1$ of the graph and choose either $a^{s_1}$ or $\varepsilon$ to move to node~$2$. This corresponds to $w_1$ being equal
to either $1 \cdot {{\Delta}}^{2s_1} \cdot \inv{2}$ or $1 \cdot \ew \cdot \inv{2}$. We follow such nondeterministic choices from node~$1$ until we reach a node ${s_{k+2}}$.  At this point, if we chose $s_{i_1}, s_{i_2}, \ldots, s_{i_l}$, such that they sum to $x$, then the reduced representation of $w$
will equal $1 \cdot \inv{k+2}$.
If there does not exist a solution to the subset sum problem, then it will not be possible to reach the empty word 
concatenating the labels on a graph structure so it would be possible to get a word $1 \cdot \inv{k+2}$, since it will be only 
$1 \cdot  w' \cdot \inv{k+2}$, where $w' \neq \varepsilon$.  

Using the encoding idea from Lemma~\ref{coding} we replace 
each transition from state $i$ to state $j$ labelled with $ {\Delta}^{2s} j$ by the automaton $M_{2s}$
and then will encode each transition form  $M_{2s}$ from a state $x$ to state $y$ with the label $z \in \Sigma$
by the braid word $f(\alpha(x)) \cdot  (\sigma_1 \sigma_2 \sigma_3 )^2 \cdot  f(\alpha(z)) \cdot f(\alpha(\inv{y}))$
following Corollary~\ref{matrixComp}. We use $\Delta^2 =  (\sigma_1 \sigma_2 \sigma_3 )^2$  
rather then $\Delta$ to have unchanged structure of words
since $\Delta^2$ is commutative with any word in $B_3$.
Also each word of the following type 
$i \cdot \ew \cdot \inv{j}$, where $i,j \in \Sigma'$ can be directly encoded by a braid 
$f(\alpha(i)) \cdot f(\alpha(\inv{i}))$ .


The number of states, the alphabet size and the number of edges for each $M_{2s_i}$ automaton
are of the order $O(m^2)$, where $m$ is $log_2 s_i$. Thus we have that the whole automaton 
after replacing all $\Delta^{2 s_i}$ transitions by $M_{2s_i}$
will be encoded with the finite number of words of the order $O({k \cdot log^2 s})$, 
where $s$ is the maximal element of $\{ s_1, s_2, \ldots , s_k\}$ and 
the length of each braid word is of the order $O({k \cdot log^2 s})$.
In addition to that we add $k$ words representing $\varepsilon$ transitions.

Using Lemma~\ref{groupEnc}, we  encode the set of words $W$ into a set of braid words 
over the alphabet  $\{\sigma_1, \sigma_1^{-1}, \sigma_2, \sigma_2^{-1} \}$, where the total number
of letters will be only polynomially increased. So finally the SSP has a solution if and only if
the braid $f(\alpha(1)) \cdot f(\alpha(\inv{k+2}))$ belongs to the defined semigroup of braid words.
\end{proof}


\subsection{$\NP$ algorithm for the Membership Problem in $B_3$}

In this section, we show that the membership problem in $B_3$ is 
decidable in $\NP$. Note that the decidability of the membership 
problem has already been solved in \cite{Potapov13} but the 
time complexity of the proposed algorithm is exponential. 
The main idea of the algorithm proposed in~\cite{Potapov13} is as 
follows. Let us suppose that we are given a set~$B = \{\beta_1, \beta_2, \ldots, \beta_n\}$ of 
braid words and a braid word $\beta$ for which we need to decide whether $\beta$ 
can be generated by the set $B$. 
We first convert the given braids~$\beta_i$ for $1 \le i \le n$ into the unique Garside normal 
form $\Delta^{k_i} \beta_i^+ $ where $k_i \in \mathbb{N}$ is an integer 
and $\beta_i^+$ is a positive braid word by \rlem{lem:garside}.

First, we construct an automaton which accepts a regular language~$L_B = \{ \Delta^{k_i} \beta_i^+ \mid 1 \le i \le n \}^+$ 
over the alphabet~$\{ \sigma_1, \sigma_2, \Delta, \Delta^{-1}\}$ 
which is the set of non-empty products of braid words from $B$ in 
the Garside normal form. Then, we iteratively insert transitions 
labelled by any power of the fundamental braid~$\Delta$ whenever 
we find a sequence of transitions from the automaton 
corresponding to $\Delta^x$ for any $x \in \mathbb{N}$. As we may 
have cycles in the process of inserting transitions, some transitions 
are labelled by $\Delta^{Expr(x_1, x_2, \ldots, x_m)}$, where 
$expr(x_1, x_2, \ldots, x_m)$ is a linear expression over the variables 
$x_1, x_2 ,\ldots, x_m$. Now we solve the membership problem by nondeterministically choosing a sequence of transitions and 
solving the system of linear Diophantine equations that consist of 
the equations labeling the chosen transitions from the automaton. 
Hence we show that the problem is decidable but in exponential time 
as the construction of the automaton takes exponential time.

Here we tackle the membership problem in $B_3$ in a slightly different way to obtain the $\NP$ upper bound. We first construct a 
(context-free) valence grammar generating every braid word corresponding 
to the input braid~$\beta$ and compute the intersection of the 
valence grammar and the regular language $L_B$ which is the set of 
non-empty products of braid words from $B$. Now we can see
that the membership problem in $B_3$ can be reduced to the emptiness problem for context-free valence grammars in polynomial time
and it follow from \rlem{lem:valenceNP} that the 
membership problem in $B_3$ is also in $\NP$.
In the following we first prove that there exists a context-free 
valence grammar that generates the set of all braid words 
equal to the given braid~$\beta$.

%
%

\begin{lemma}\label{lem:valence}
Given a braid word~$\beta \in B_3$, there exists a context-free valence
grammar~$G$ over the alphabet~$\Sigma = \{\sigma_1, \sigma_2, \Delta, \Delta^{-1}\}$ such that 
$L(G)$ is the set of all braid words over $\Sigma$ which are equal to 
$\beta$.
\end{lemma}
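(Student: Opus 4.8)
The plan is to construct the valence grammar $G$ explicitly by building on the Garside normal form structure of $B_3$ and using the automorphism $\tau$ together with the commutation relation~\eqref{eq:tau}. The essential observation is that every braid word over $\Sigma$ equal to $\beta$ can be brought, via the relations, into a form involving a power of $\Delta$ times a positive word, and that the ambiguity in representing $\beta$ comes from two sources: the freedom to insert cancelling pairs $\sigma_i\sigma_i^{-1}$ (and $\Delta\Delta^{-1}$) and the freedom afforded by the braid relation $\sigma_1\sigma_2\sigma_1=\sigma_2\sigma_1\sigma_2$. I would first fix a reference representation $\Delta^k w^+$ of $\beta$ using \rlem{lem:garside}, where the exponent $k$ is tracked as an integer counter. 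The role of the $\mathbb{Z}^k$ valence (here I expect a small fixed dimension, likely $\mathbb{Z}^1$ or $\mathbb{Z}^2$) is precisely to record the net $\Delta$-exponent and, if needed, an abelianization invariant such as the exponent sum of the generators, so that only words with the correct invariants are accepted.

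The key steps, in order, would be as follows. First I would set up the terminal alphabet $\Sigma=\{\sigma_1,\sigma_2,\Delta,\Delta^{-1}\}$ and design context-free productions that generate arbitrary freely-reducible ``noise'': productions that emit a matched pair $\sigma_i X \sigma_i^{-1}$ or $\Delta X \Delta^{-1}$, contributing $\vec 0$ to the valence since such pairs represent the identity. Second, I would encode the braid relation by productions that allow a nonterminal to expand to either side $\sigma_1\sigma_2\sigma_1$ or $\sigma_2\sigma_1\sigma_2$, exploiting \rlem{Orevkov08} to guarantee that all positive words equal in $B_3$ are reachable by these substitutions. Third, I would use the commutation identity~\eqref{eq:tau}, namely $\Delta\beta=\tau(\beta)\Delta$, to push $\Delta$ and $\Delta^{-1}$ symbols freely across the positive part, applying the letter-swap automorphism $\tau$ accordingly; this is what lets the grammar generate representations with the $\Delta$'s interleaved anywhere rather than collected on the left. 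The valence counter then enforces that the algebraic sum of $\Delta$ exponents matches $k$, ruling out words that carry an incorrect power of the Garside element.

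The correctness argument splits into soundness and completeness. Soundness (every generated word equals $\beta$) follows because each production either inserts a relator or applies a defining relation, so the represented braid is invariant, while the valence constraint fixes the $\Delta$-exponent. The harder direction is completeness: I must show that \emph{every} word $u\in\Sigma^*$ with $u=\beta$ in $B_3$ is derivable. For this I would argue that any such $u$ can be transformed into the fixed normal form $\Delta^k w^+$ by a finite sequence of the allowed moves---free cancellations, braid-relation rewrites, and $\tau$-commutations---and then run this sequence backwards as a derivation in $G$. I expect the main obstacle to be exactly this completeness step: proving that the three families of moves generate \emph{all} of the relation-equivalences in $B_3$ and that they can always be sequenced so the intermediate words stay within the context-free structure (i.e.\ that the needed rewrites are confined to nested, context-free-expressible windows rather than requiring unbounded non-context-free coordination). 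Controlling this will likely require a careful normal-form / confluence argument leaning on \rlem{lem:garside} and \rlem{Orevkov08}, and verifying that the single integer valence counter suffices to capture the only global (non-context-free) invariant, namely the total $\Delta$-exponent.
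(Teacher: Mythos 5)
Your overall strategy coincides with the paper's: fix the Garside normal form of $\beta$, use a single $\mathbb{Z}$-valued valence to enforce that the total $\Delta$-exponent equals $k$, and use the commutation $\Delta\gamma=\tau(\gamma)\Delta$ of \req{eq:tau} to account for powers of $\Delta$ interleaved with the positive part. Two points need repair. First, the terminal alphabet is $\Sigma=\{\sigma_1,\sigma_2,\Delta,\Delta^{-1}\}$ and contains no $\sigma_i^{-1}$, so your ``noise'' productions emitting matched pairs $\sigma_i X\sigma_i^{-1}$ cannot be written over this alphabet; the only available source of cancellation is $\Delta^{-1}$ against words equal to $\Delta$, and this restriction is in fact what keeps the construction tractable.

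Second, and more importantly, you flag completeness as the main obstacle but do not close it, and this is exactly where the paper's construction does its work. The paper's resolution is a structural characterization: every word over $\Sigma$ equal to $\beta$ factors as $C_0\,\rr^{p'(1)}(\sigma_{i_1})\,C_1\,\rr^{p'(2)}(\sigma_{i_2})\cdots C_{n-1}\,\rr^{p'(n)}(\sigma_{i_n})\,C_n$, where each $C_j$ is a word equal to some power of $\Delta$, $\rr$ is the letter swap $\tau$, and $p'(x)$ is the parity of the number of odd-power blocks to the left of position $x$. Given this factorization, the grammar needs only (i) a regular backbone of nonterminals $S_j/\overline{S_j}$ that records this parity while emitting $\sigma_{i_j}$ or $\rr(\sigma_{i_j})$, and (ii) nonterminals $A_{\sf even},A_{\sf odd}$ that generate all words equal to $\Delta^m$ of the corresponding parity while adding $m$ to the counter --- a self-contained context-free recursion built from the two spellings of $\Delta$ (namely $\sigma_1\sigma_2\sigma_1$ and $\sigma_2\sigma_1\sigma_2$) with further $\Delta$-blocks nested inside. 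This answers your worry about ``unbounded non-context-free coordination'' concretely: the only global invariants are the twist parity (finite-state) and the total $\Delta$-exponent (the one counter); everything else is nested and local. If you add this factorization step and drop the inverse-generator productions, your plan becomes essentially the paper's proof.
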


\begin{proof}
First, we convert the given braid~$\beta$ into the Garside normal form~$ \beta^+ \Delta^k$. Note that $k \in \mathbb{Z}$ is an integer and $\beta^+ = \sigma_{i_1}\sigma_{i_2}\sigma_{i_3} \cdots \sigma_{i_n} \in B_3^+$ is a 
positive braid of length~$n$ where $i_j = \{1, 2 \}$ for $1 \le j \le n$.

We define a valence grammar~$G = (N,\Sigma, R,S)$, where 
\begin{itemize}
\item $N = \{S\} \cup \{S_1, S_2, \ldots, S_{n-1} \} \cup 
\{\overline{S_1}, \overline{S_2}, \ldots, \overline{S_{n-1}} \} \cup \{ A_{\sf even}, A_{\sf odd},A_{\sf any}\}$ is a finite set of nonterminals, 
\item $\Sigma = \{\sigma_1, \sigma_2 , \Delta, \Delta^{-1}\}$ is a set of 
terminals, 
\item $R$ is a finite set of productions, and
\item $S$ is the axiom.
\end{itemize}
We define 
$R$ to contain the following production rules:
\begin{itemize}
\item $S \xrightarrow{-k} A_{\sf even} \sigma_{i_1} S_1 \mid A_{\sf odd} \rr(\sigma_{i_1}) \overline{S_1}$,
\item $S_j \xrightarrow{0} A_{\sf even} \sigma_{i_{j+1}} S_{j+1} \mid A_{\sf odd} \rr(\sigma_{i_{j+1}})\overline{S_{j+1}}$ for $1 \le j \le n-2$,
\item $\overline{S_j} \xrightarrow{0} A_{\sf even} \rr(\sigma_{i_{j+1}}) \overline{S_j} \mid A_{\sf odd} \sigma_{i_{j+1}} S_{j+1}$ for $1 \le j \le n-2$,
\item $S_{n-1} \xrightarrow{0} A_{\sf even} \sigma_{i_n} A_{\sf any} \mid A_{\sf odd} \rr(\sigma_{i_n}) A_{\sf any} $,
\item $\overline{S_{n-1}} \xrightarrow{0} A_{\sf even} \rr(\sigma_{i_n}) A_{\sf any} \mid A_{\sf odd} \sigma_{i_n} A_{\sf any} $,
\item $A_{\sf even} \xrightarrow{0} \varepsilon \mid A_{\sf even}A_{\sf even} \mid A_{\sf odd} A_{\sf odd}$,
\item $A_{\sf even} \xrightarrow{1} \sigma_2 A_{\sf odd} \sigma_2 A_{\sf even} \sigma_1 \mid \sigma_1 A_{\sf even} \sigma_2 A_{\sf odd} \sigma_2 \mid \sigma_1 A_{\sf odd} \sigma_1 A_{\sf even} \sigma_2 \mid \sigma_2 A_{\sf even} \sigma_1 A_{\sf odd} \sigma_1,$\sloppy
\item $A_{\sf odd} \xrightarrow{1} \Delta,$
\item $A_{\sf odd} \xrightarrow{-1} \Delta^{-1},$
\item $A_{\sf odd} \xrightarrow{0} A_{\sf even} A_{\sf odd} \mid A_{\sf odd} A_{\sf even},$
\item $A_{\sf odd} \xrightarrow{1} \sigma_1 A_{\sf even} \sigma_2 A_{\sf even} \sigma_1 \mid \sigma_1 A_{\sf odd} \sigma_1 A_{\sf odd} \sigma_1 \mid \sigma_2 A_{\sf even} \sigma_1 A_{\sf even} \sigma_2 \mid \sigma_2 A_{\sf odd} \sigma_2 A_{\sf odd} \sigma_2,$ and\sloppy
\item $A_{\sf any} \xrightarrow{0} A_{\sf even} \mid A_{\sf odd}$.
\end{itemize}
Note that we can derive every braid word corresponding to 
the $\Delta^k$ where $k$ is an even (respectively, odd) integer from the nonterminal~$A_{\sf even}$ (respectively, $A_{\sf odd}$). In particular, 
the following derivation relation holds:
\[
(A_{\sf even}, 0) \sd (\omega, k),
\]
where $\omega$ is a braid word corresponding to $\Delta^k$ for an 
even integer~$k$. Similarly, $A_{\sf odd}$ can be replaced 
by every braid word corresponding to $\Delta^m$ where $m$ is 
an odd integer.

Now it remains to prove that the valence grammar~$G$ actually generates every braid word which is equal to 
the given braid~$\beta$ by the relations of the braid group~$B_3$.
First, we show that the every braid word generated by $G$ is equal to 
the given braid~$\beta$. Since the `$S$'-nonterminals should be substituted by 
regular type productions of $G$ (containing at most one `$S$'-nonterminal on the right-hand side) to derive words consisting of 
terminals, we see that the following derivation should 
be performed in any case:
\begin{equation}\label{eq:derivation}
(S, 0) \sd (A_0 \rr^{p(1)}(\sigma_{i_1}) A_1 \rr^{p(2)}(\sigma_{i_2}) A_2 \cdots A_{n-1}\rr^{p(n)} (\sigma_{i_n}) A_n, -k),
\end{equation}
where
\[
p(x)=
\begin{dcases}
0, & \mbox{if } |\{ k \mid 0 \le k < x,\;\; A_k = A_{\sf odd} \}| \equiv 0 \mod 2,\\
1, & \mbox{otherwise.}
\end{dcases}
\]
In other words, $p(x)$ has a value of 0 if the number of $A_{\sf odd}$ 
appearing in front of the $x$th terminal symbol of the 
positive braid~$\beta^+$ is even. Now we move the word generated by 
`$A$'-nonterminals to the right by~\req{eq:tau}. After moving every 
`$A$'-nonterminals to the right, we obtain the following braid word:
\[
\sigma_{i_1} \sigma_{i_2} \cdots \sigma_{i_n} A_0 A_1 \cdots A_n.
\]
It is easy to see that `$A$'-nonterminals can be substituted 
by some braid word which is equal to $\Delta^k$ as claimed above, 
and therefore, we prove that any braid word generated by $G$ is 
equal to the given braid~$\beta$.

Lastly, we show that the valence grammar~$G$ generates every braid 
word equal to the braid~$\beta$. First, we define the following 
two sets
\begin{itemize}
\item $C_{\sf even} = \{ \omega \mid \omega = \Delta^k, k \equiv 0 \mod 2\}$
and 
\item $C_{\sf odd} = \{ \omega \mid \omega = \Delta^k, k \equiv 1 \mod 2\}$
\end{itemize}
such that $C_{\sf even}$ (respectively, $C_{\sf odd}$) is the set of all braid words equal to 
the composition of an even (respectively, odd) number of the Garside element~$\Delta$.
Then, every braid word equal 
to $\beta$ is captured by the following set of words:
\[
C_0 \cdot \{\rr^{p'(1)}(\sigma_{i_1})\} \cdot C_1 \cdot \{\rr^{p'(2)}(\sigma_{i_2})\}\cdot C_2  \cdots C_{n-1} \cdot \{\rr^{p'(n)} (\sigma_{i_n}) \} \cdot C_n,
\]
where
\[
p'(x)=
\begin{dcases}
0, & \mbox{if } |\{ k \mid 0 \le k < x,\;\; C_k = C_{\sf odd} \}| \equiv 0 \mod 2,\\
1, & \mbox{otherwise.}
\end{dcases}
\]
Following the derivation described in (\ref{eq:derivation}), we can see that every braid word equal to $\beta$ can be derived by the valence grammar $G$.
\end{proof}

Now we are ready to present our $\NP$ algorithm for the membership problem in the 
braid group~$B_3$.

\begin{lemma}\label{lem:membershipNP}
The membership problem can be decided in $\NP$ for 
braids from $B_3$.
\end{lemma}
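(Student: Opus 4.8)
The plan is to reduce the membership problem in $B_3$ to the emptiness problem for context-free valence grammars and then invoke \rlem{lem:valenceNP}. Let $B = \{\beta_1, \ldots, \beta_n\}$ be the given set of generators and let $\beta$ be the target braid. The key observation is that $\beta \in \langle B \rangle$ if and only if some nonempty product of the generators, written as a concrete braid word over $\Sigma = \{\sigma_1, \sigma_2, \Delta, \Delta^{-1}\}$, represents $\beta$ in $B_3$. To capture all such products I would first compute, for each $\beta_i$, its Garside normal form $\Delta^{k_i}\beta_i^+$ using \rlem{lem:garside}; for $B_3$ this is computable in polynomial time and the resulting word (writing $\Delta^{k_i}$ as $|k_i|$ copies of $\Delta$ or $\Delta^{-1}$) has length polynomial in $|\beta_i|$. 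Then $L_B = \{\Delta^{k_1}\beta_1^+, \ldots, \Delta^{k_n}\beta_n^+\}^+$ is a regular language accepted by an NFA of polynomial size, and it is exactly the set of words over $\Sigma$ obtained by concatenating the normal forms of nonempty products of generators from $B$.

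By \rlem{lem:valence} there is a context-free valence grammar $G$, of size polynomial in $|\beta|$, whose language $L(G)$ is the set of all words over $\Sigma$ equal to $\beta$. I claim that
\[
\beta \in \langle B \rangle \iff L_B \cap L(G) \neq \emptyset.
\]
For the forward direction, if $\beta = Y_1 Y_2 \cdots Y_r$ with each $Y_t \in B$, then concatenating the corresponding Garside normal forms yields a word $w \in L_B$ which represents $\beta$ in $B_3$, hence $w \in L(G)$. Conversely, any $w \in L_B$ represents an element of the semigroup $\langle B \rangle$, and $w \in L(G)$ forces that element to equal $\beta$; so $\beta \in \langle B \rangle$.

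It then remains to show that $L_B \cap L(G)$ is the language of a context-free valence grammar that can be built in polynomial time. Here I would use the closure of context-free valence languages under intersection with regular languages: applying the standard product (Bar--Hillel) construction to $G$ and the NFA for $L_B$, with nonterminals indexed by pairs of automaton states and the $\mathbb{Z}$-valuations of the productions carried over unchanged, produces a valence grammar $G'$ with $L(G') = L_B \cap L(G)$. Since the construction multiplies the grammar size by at most the square of the number of automaton states, $G'$ has polynomial size and is computable in polynomial time. Finally, $\beta \in \langle B \rangle$ holds iff $L(G') \neq \emptyset$, which by \rlem{lem:valenceNP} is decidable in $\NP$. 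Composing the polynomial-time constructions with this $\NP$ test yields an $\NP$ algorithm for the membership problem in $B_3$.

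The main obstacle I anticipate is the bookkeeping of sizes: one must verify that the Garside normal forms (and in particular the exponents $k_i$) stay polynomially bounded in the input length, so that $L_B$, $G$, and the intersection grammar $G'$ are all of polynomial size, and that the valence annotations are correctly preserved by the product construction so that the $\mathbb{Z}$-constraint still encodes that the total power of $\Delta$ is consistent. The conceptual correctness is immediate from \rlem{lem:valence}; the real work lies in confirming that every stage of the pipeline is genuinely polynomial and that the intersection construction respects the valence semantics.
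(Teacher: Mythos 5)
Your proposal is correct and follows essentially the same route as the paper: convert the generators to Garside normal form, build the regular language $L_B$ of products, take the valence grammar $G$ from \rlem{lem:valence} for the target braid, and reduce membership to emptiness of $L_B \cap L(G)$ via \rlem{lem:valenceNP}. The only (inessential) difference is that you intersect by a Bar--Hillel product directly on the grammar, whereas the paper first converts $G$ to a pushdown automaton with a blind counter and takes the Cartesian product with the NFA; both are polynomial-time and yield the same $\NP$ bound.
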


\begin{proof}
Let us suppose that we are given a set~$B = \{\beta_1, \beta_2, \ldots, \beta_n\}$ of 
braid words and a braid word $\beta$ for which we need to decide whether $\beta$ 
can be generated by the set $B$. 
We first convert the given braids~$\beta_i$ for $1 \le i \le n$ into the unique Garside normal 
form $\Delta^{k_i} \beta_i^+ $ where $k_i \in \mathbb{N}$ is an integer 
and $\beta_i^+$ is a positive braid word by \rlem{lem:garside}.

Let us define the
regular language $L_B = \{ \Delta^{k_i} \beta_i^+ \mid 1 \le i \le n \}^+$ 
over the alphabet~$\{ \sigma_1, \sigma_2, \Delta, \Delta^{-1}\}$.
Clearly, $L_B$ should contain a braid word~$\beta'$
which is equal to $\beta$ by the relations of the braid group~$B_3$ if and 
only if the given set~$B$ of braid words generates the target braid~$\beta$.

By \rlem{lem:valence}, there exists a valence grammar $G$ 
generating every braid word 
in $B_3$ equal to the target braid~$\beta$ and definable over 
the alphabet~$\{ \sigma_1, \sigma_2, \Delta, \Delta^{-1}\}$.
Therefore, the problem of checking whether $\beta$ can be generated 
by the set $B$ reduces to the problem of checking whether the intersection 
of $L_B$ and $L(G)$ is empty.

It is known that we can convert a given valence grammar over $\mathbb{Z}^k$ into a pushdown automaton (PDA) equipped with $k$ additional blind counters in polynomial time~\cite{FernauS02}. Since we are using an integer weight of dimension one, the valence grammar~$G$ can be converted into a PDA with a blind counter in 
polynomial time and construct a new PDA with a blind counter recognizing 
the intersection~$L_B \cap L(G)$ by constructing the Cartesian product 
of two automata. It is easy to see that the resulting automaton is still 
a PDA with a blind counter of size polynomial in the input size. By~\rlem{lem:valenceNP}, we conclude that the membership problem in the braid group~$B_3$ can be decided in $\NP$.
\end{proof}

Following \rlem{lem:nphard} and \rlem{lem:membershipNP}, we establish the following complexity result for the membership problem in the 
braid group~$B_3$.

\begin{theorem}\label{thm:npcomplete}
The membership problem for braids from the braid group~$B_3$ is $\NP$-complete.
\end{theorem}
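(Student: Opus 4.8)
The plan is to establish $\NP$-completeness by proving the two matching bounds separately and then combining them, since $\NP$-completeness of a decision problem is exactly membership in $\NP$ together with $\NP$-hardness. Both bounds have in fact been prepared in the preceding lemmas, so the proof of the theorem itself is a one-line combination; I sketch how each half is obtained.

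For the lower bound I would reduce from the subset sum problem, which is $\NP$-complete. The idea, carried out in \rlem{lem:nphard}, is to build a ``label graph'' whose paths from the first node to the last spell out the trivial reduced word exactly when a chosen subcollection of the $s_i$ sums to the target $x$; each edge contributes a unary block $\Delta^{2s_i}$ (or $\varepsilon$) flanked by node-labels $i,\inv{j}$ so that consecutive labels telescope. Two ingredients keep the reduction polynomial and faithful: the encoding lemmas (\rlem{groupEnc} and \rlem{BraidsEnc}), which embed an arbitrary group alphabet into $B_3$ via $f\circ\alpha$, and the counter automaton of \rlem{coding}, which replaces the exponentially long unary block $\Delta^{2s}$ by a polynomial-size gadget whose only reduced-to-$\varepsilon$ word has length $2s$. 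Centrality of $\Delta^2$ in $B_3$ is what lets these blocks commute past the rest of the word without disturbing the telescoping of the labels, so that the subset-sum instance is solvable iff the target braid $f(\alpha(1))\cdot f(\alpha(\inv{k+2}))$ lies in the generated semigroup.

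For the upper bound I would show membership in $\NP$ by reducing to the emptiness problem for context-free valence grammars, which is $\NP$-complete by \rlem{lem:valenceNP}. Following \rlem{lem:membershipNP}, I put each generator into Garside normal form $\Delta^{k_i}\beta_i^+$ and take $L_B=\{\Delta^{k_i}\beta_i^+\}^+$, a regular language over $\{\sigma_1,\sigma_2,\Delta,\Delta^{-1}\}$ whose words are exactly the nonempty products of generators written in normal form. The target braid $\beta$ lies in the semigroup iff $L_B$ contains some word equal to $\beta$ in $B_3$, and \rlem{lem:valence} supplies a valence grammar $G$ whose language $L(G)$ is precisely the set of all braid words equal to $\beta$. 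Intersecting $G$ with $L_B$ (via the Cartesian product of the equivalent pushdown automaton with a blind counter and a DFA for $L_B$) yields another valence grammar of polynomial size, whose emptiness decides the membership question; invoking \rlem{lem:valenceNP} gives the $\NP$ bound.

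I expect the genuine difficulty to lie entirely inside \rlem{lem:valence}: designing a valence grammar that generates \emph{exactly} the braid words equal to $\beta$, no more and no less. The subtlety is that an arbitrary representative of $\beta$ can interleave copies of $\Delta^{\pm1}$ throughout the positive part $\beta^+=\sigma_{i_1}\cdots\sigma_{i_n}$, and each such copy flips the two generators by the automorphism $\tau$ through the relation $\Delta\beta=\tau(\beta)\Delta$ from \req{eq:tau}. Tracking the parity of the number of $\Delta$'s inserted before each position, so that the correct (possibly flipped) generator $\rr^{p(x)}(\sigma_{i_x})$ is emitted, while using the single integer valence counter to force the total $\Delta$-exponent to equal $k$, is the technical core; once that correctness argument is in hand, the theorem follows immediately by combining \rlem{lem:nphard} and \rlem{lem:membershipNP}.
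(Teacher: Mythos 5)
Your proof is correct and follows exactly the paper's route: the theorem is obtained by combining the $\NP$-hardness reduction from subset sum (Lemma~\ref{lem:nphard}) with the $\NP$ upper bound via reduction to emptiness of context-free valence grammars (Lemma~\ref{lem:membershipNP}), which is precisely how the paper derives it. Your identification of Lemma~\ref{lem:valence} as the technical core of the upper bound also matches where the paper invests its effort.
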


\subsection{Freeness Problem in the Braid Group $B_3$}

In the proof of~\rlem{lem:membershipNP}, we construct a finite state automaton recognizing the 
regular language~$L_B$ with $n$ multi-states loops representing braid words in Garside normal form 
from the set $B$. Then, a path from the initial state to itself in this automaton represents a  
braid that can be constructed by a semigroup generator $ \{ \beta_1, \beta_2, \ldots ,\beta_n \} $.
Note that the rest of the proof is not based on the structure of this automata and
the same algorithm can be applied to check the membership for any other finite graph, where labels are braids from $B_3$. Hence, we can immediately establish the following result.

\begin{corollary}\label{cor:graph_membership}
Given a directed graph~$G$ with labels from the braid group~$B_3$, $u$ and $v$ are two nodes from $G$ and $\beta$ is a braid from $B_3$. 
Then, the problem of deciding whether exists a path $P$ from $u$ and $v$ such that a direct sum of braids on labels along a path $P$ is isotopic to a braid $\beta$ can be decided in $\NP$.
\end{corollary}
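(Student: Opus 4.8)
The plan is to reuse, almost verbatim, the automaton-intersection argument from the proof of \rlem{lem:membershipNP}, observing that nothing there depends on the special product (single-state, looped) structure of $L_B$. First I would preprocess the graph: each edge label is a braid word, which I convert into its Garside normal form $\Delta^{k}\gamma^+$ over the alphabet $\{\sigma_1,\sigma_2,\Delta,\Delta^{-1}\}$ using \rlem{lem:garside} in polynomial time. I then build a finite automaton $A$ whose states are the vertices of $G$, with $u$ as the initial state and $v$ as the unique accepting state, subdividing every edge into a chain of single-symbol transitions that spell out the Garside normal form of its label. The language $L_{u,v}=L(A)$ is then a regular set over $\{\sigma_1,\sigma_2,\Delta,\Delta^{-1}\}$ consisting of exactly those braid words obtained by concatenating the edge labels along a directed path from $u$ to $v$, and the subdivision adds only polynomially many states.

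Next, a path $P$ from $u$ to $v$ whose composed braid is isotopic to $\beta$ exists if and only if $L_{u,v}$ contains some braid word equal to $\beta$ in $B_3$. By \rlem{lem:valence} there is a dimension-one context-free valence grammar $G'$ over the same alphabet whose language $L(G')$ is precisely the set of all braid words equal to $\beta$. Hence the question reduces to deciding emptiness of the intersection $L_{u,v}\cap L(G')$.

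To finish, I would follow the last step of \rlem{lem:membershipNP}: convert $G'$ into a pushdown automaton with a single blind counter in polynomial time, form the Cartesian product with the deterministic automaton $A$ to obtain a pushdown automaton with a blind counter recognizing $L_{u,v}\cap L(G')$, and re-read this machine as a context-free valence grammar of size polynomial in $G$, $u$, $v$ and $\beta$. Its emptiness is decidable in $\NP$ by \rlem{lem:valenceNP}, and since the whole reduction is polynomial, the decision problem lies in $\NP$.

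The main obstacle is bookkeeping rather than a new idea: I must verify that every step stays polynomial. The only genuinely delicate point is that the Garside exponents $k$ can be large, so I keep them encoded symbolically through the letters $\Delta,\Delta^{-1}$ (exactly as $L_B$ does) rather than expanding them, relying on the fact that \rlem{lem:valence} already accounts for arbitrary powers of $\Delta$ via the nonterminals $A_{\sf even}$ and $A_{\sf odd}$. With that in place, the arbitrary graph structure of $G$ affects only the regular factor of the intersection and leaves the valence-grammar machinery of \rlem{lem:membershipNP} untouched.
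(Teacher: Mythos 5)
Your proposal is correct and follows essentially the same route as the paper: the paper's own justification is precisely the observation that the automaton-intersection argument of Lemma~\ref{lem:membershipNP} never uses the special looped structure of $L_B$, so one may replace it by the automaton read off the labelled graph with initial state $u$ and accepting state $v$ and then intersect with the valence grammar of Lemma~\ref{lem:valence}. You merely spell out the construction (edge subdivision, Garside normal forms kept symbolic in $\Delta$) in more detail than the paper does.
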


Note that the $\NP$ algorithm for the membership problem can exploited for decidability of the
freeness problem in braid semigroups with generators from $B_3$.

\begin{theorem}
The freeness problem for braids from the braid group~$B_3$ can be decided in $\NP$.
\end{theorem}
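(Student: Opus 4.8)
The plan is to reduce \emph{non-freeness} to the graph--membership machinery of \rcor{cor:graph_membership}, so that a collision between two distinct products becomes a short certificate verifiable in $\NP$. Write $\pi$ for the evaluation map sending an index word $w \in \{1,\dots,n\}^+$ to the braid $\beta_{w_1}\cdots\beta_{w_{|w|}} \in B_3$. By definition $\langle B\rangle$ fails to be free exactly when there are two distinct index words $u \neq v$ with $\pi(u) = \pi(v)$. The first step is to put such a witness into a normal form by stripping the longest common prefix $c$, writing $u = cu_1$ and $v = cv_1$. Because $B_3$ is a \emph{group}, the common factor $\pi(c)$ cancels, so $\pi(u_1) = \pi(v_1)$, and by maximality of $c$ either one of $u_1, v_1$ is empty or they begin with distinct letters. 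This yields the characterization: $\langle B\rangle$ is non-free if and only if either (a) some nonempty product of generators equals the trivial braid, or (b) there are indices $i \neq j$ and products $p, q \in \langle B\rangle \cup \{1\}$ with $\beta_i\,p = \beta_j\,q$ in $B_3$. The converse directions are immediate: a nonempty $w$ with $\pi(w) = 1$ gives the distinct pair $(wx, x)$ for any nonempty $x$, and any $i \neq j$ and $p, q$ as in (b) give the distinct pair $(i\,u', j\,v')$ with $\pi(u') = p$ and $\pi(v') = q$.

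Next I would turn each condition into a single graph--membership query whose target is the trivial braid. For (a), take a graph with start node $t_0$, edges $t_0 \xrightarrow{\beta_k} t_1$ for every generator, self-loops $t_1 \xrightarrow{\beta_k} t_1$, and final node $t_1$; every $t_0 \to t_1$ path reads a nonempty product of generators, so by \rcor{cor:graph_membership} the existence of such a path evaluating to the trivial braid is decidable in $\NP$. For (b), having guessed the pair $i \neq j$ (only $O(\log n)$ bits), build a graph $G_{ij}$ with a forward phase and a backward phase: $s_0 \xrightarrow{\beta_i} s_1$, self-loops $s_1 \xrightarrow{\beta_k} s_1$ realizing an arbitrary $p \in \langle B\rangle \cup \{1\}$, a one-way edge $s_1 \xrightarrow{1} s_2$ labelled by the trivial braid, self-loops $s_2 \xrightarrow{\beta_k^{-1}} s_2$ realizing an arbitrary $q^{-1}$ with $q \in \langle B\rangle \cup \{1\}$, and finally $s_2 \xrightarrow{\beta_j^{-1}} s_3$. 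Every $s_0 \to s_3$ path evaluates to $\beta_i\,p\,q^{-1}\,\beta_j^{-1}$, which equals the trivial braid precisely when $\beta_i p = \beta_j q$; hence \rcor{cor:graph_membership} decides (b) for the guessed pair in $\NP$. The two self-loop phases must be kept on \emph{separate} states joined by a one-way edge, since otherwise interleaving $\beta_k$ and $\beta_k^{-1}$ would enlarge the middle factor to the whole subgroup and destroy the correspondence with condition (b).

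Finally I would assemble these into one nondeterministic procedure: guess which case occurs and, in case (b), guess the pair $(i,j)$, then run the corresponding $\NP$ test from \rcor{cor:graph_membership}. All gadgets have $O(n)$ states and edges, with labels drawn from $B$ and their inverses, so the reduction is polynomial, and the disjunction over the $O(n^2)$ pairs stays inside $\NP$. This detects non-freeness with a polynomial-size certificate, which is the sense in which the freeness problem for $B_3$ lies in $\NP$.

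I expect the main obstacle to be the correctness of the normal-form characterization rather than the gadget construction: one must argue carefully that cancelling the common prefix in the \emph{group} $B_3$ loses no witnesses, and that case~(b) with $p$ or $q$ empty is genuinely reachable (zero iterations of a self-loop), so that collisions of the form $(i\,u', j)$ are not overlooked. A secondary point of care is the two-phase gadget, where the ordering of the forward and backward self-loops is essential to capture exactly $\beta_i p = \beta_j q$ and nothing larger.
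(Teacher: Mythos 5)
Your proof is correct and follows essentially the same strategy as the paper: use the fact that $B_3$ is a group to turn a collision between two distinct products into a single combined product that must equal the trivial braid, realize the set of all such candidate products as a labelled graph (generators in a forward phase, inverted generators in a backward phase, the two phases kept on separate states), and invoke \rcor{cor:graph_membership} to test identity-membership in $\NP$. The one substantive difference is the normal form of the witness. The paper strips the collision from both ends, insisting on products $A_1 X A_2 = C_1 Y C_2$ with $A_1\ne C_1$ and $A_2\ne C_2$ and $X,Y$ drawn from the semigroup, whereas you strip only the longest common prefix, which forces you to add the separate case (a) --- a nonempty product equal to the trivial braid --- to cover the situation where one side cancels away entirely. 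Your decomposition is the more careful of the two: the paper's two-sided form tacitly assumes that a minimal collision has length at least two on each side and differs at both ends, and it does not address witnesses in which one index word is a proper prefix of the other (equivalently, some nonempty product equals the identity); your case (a), together with the explicit allowance that $p$ or $q$ may be empty (zero traversals of a self-loop), closes exactly these gaps. Your closing remark about why the forward and backward self-loops must live on distinct states, so that the gadget realizes precisely $\beta_i\,p\,q^{-1}\beta_j^{-1}$ rather than an arbitrary element of the generated group, is also the right correctness point and is implicit but unstated in the paper's construction of $L_B$.
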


\begin{proof}
Let us consider a set $B=\{ \beta_1, \beta_2, \ldots \beta_n \}$ of braids from $B_3$ and a braid 
semigroup~$\langle B \rangle$ which is finitely generated by the set~$B$.
If the semigroup $\langle B \rangle$ is not free, then there are two products of the form  $A_1 \cdot X \cdot  A_2$
and $C_1 \cdot Y \cdot C_2$ such that 
\begin{equation}\label{eq:non-free}
A_1 \cdot X \cdot  A_2=C_1 \cdot Y \cdot C_2,
\end{equation}
where $A_1 \ne C_1$, $A_2 \ne C_2$,  $A_1,A_2,C_1,C_2 \in B$, and $X,Y \in  \langle B \rangle$.

Now it is not difficult to see that we can check whether the semigroup $\langle B \rangle$ is free 
if we can decide whether there exist two products as in~\req{eq:non-free} since we can iteratively 
run the same procedure for each pairs of braids from the set $B$. Indeed, we can decide 
whether there exist two products as in~\req{eq:non-free} for the chosen braids 
$A_1,A_2,C_1,C_2$ from the set $B$ by checking whether the following equation can 
be satisfied for some $X, Y \in \langle B \rangle$:
\[A_1 X A_2  {C_2}^{-1} Y^{-1} {C_1}^{-1} = I.\]
Then, we can construct a finite-state automaton recognizing all 
the sequences of braids of the form on the left-hand side of the equation and further construct 
an automaton that recognizes the following regular language over braids from $B_3$:
\[
\begin{split}
L_B = \{ A_1 w_1 A_2 C_2^{-1} w_2^{-1} C_1^{-1} \mid  A_1 \ne C_1, \;\; A_2 \ne C_2, \;\; A_1,A_2,C_1,C_2 \in B,\;\;\\ w_1, w_2 \in B^* \}.
\end{split}
\]
It should be noted that the construction of the automaton recognizing $L_B$ takes 
polynomial time. We can see that the braid semigroup~$\langle B \rangle$ is not free if and 
only if the regular language~$L_B$ contains any braid word corresponding to the trivial 
braid, which can be checked in $\NP$ by~\rcor{cor:graph_membership}. 
Hence, we conclude that the freeness problem for braid semigroups in $B_3$ can be decided in $\NP$.
\end{proof}
\section{Undecidability of Decision Problems in the Braid Group~$B_5$} 
The composition problems become harder with a larger number of strands. 
Since the braids group~$B_5$ contain the direct product 
of two free groups, it is possible to show that most of the composition problems are undecidable in $B_5$.
We first provide the following property of $B_5$ which will be used later in the undecidability results in 
$B_5$.

\begin{lemma}[\cite{BezverkhnijD99}]\label{product}
Subgroups $\langle {\sigma_1}^4 ,  {\sigma_2}^4  \rangle$, $\langle {\sigma_4}^2 ,  d  \rangle$ of the group $B_5$ are free
and $B_5$  contains the direct product  
$\langle {\sigma_1}^4 ,  {\sigma_2}^4  \rangle \times \langle {\sigma_4}^2 ,  d  \rangle$ of two free groups of rang 2
 as a subgroup, where $d = \sigma_4 \sigma_3 \sigma_2 \sigma_1^2 \sigma_2 \sigma_3 \sigma_4$.
\end{lemma}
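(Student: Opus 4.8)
The statement really has three independent components: the freeness of each of the two rank-two subgroups, the fact that they commute elementwise, and the fact that together they generate their internal direct product. The plan is to treat all three geometrically, viewing $B_5$ as the mapping class group of a disc with five marked points $p_1,\dots,p_5$ placed on a line, with $\sigma_i$ the positive half-twist exchanging $p_i$ and $p_{i+1}$. Throughout write $H=\langle\sigma_1^4,\sigma_2^4\rangle$ and $K=\langle\sigma_4^2,d\rangle$; since every generator is a pure braid (the $\sigma_i$ occur only in even powers, and a direct trace shows $d=\sigma_4\sigma_3\sigma_2\sigma_1^2\sigma_2\sigma_3\sigma_4$ induces the trivial permutation), both $H$ and $K$ lie in the pure braid group $P_5$.

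The freeness of $H$ is already contained in \rlem{BraidsEnc}: the map sending the two free generators of $\fgr(\{c,d\})$ to $\sigma_1^4$ and $\sigma_2^4$ is the monomorphism $f$, so $H\cong F_2$. For $K$ I would invoke the standard fact that two Dehn twists about simple closed curves with geometric intersection number at least two generate a free group of rank two. Here $\sigma_4^2$ is the Dehn twist $T_b$ about a curve $b$ enclosing $\{p_4,p_5\}$, while computing the pairwise crossing (linking) numbers of the word $d$ shows that strand $5$ links each of strands $1,2,3,4$ equally and that no two of the strands $1,2,3,4$ link each other; geometrically $d$ is therefore the pure braid that pushes $p_5$ once around the group $\{p_1,p_2,p_3,p_4\}$, which equals $T_c$ up to the central full twist $z$, where $c$ encloses $\{p_1,p_2,p_3,p_4\}$. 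Since $i(b,c)\ge 2$, the images of $\sigma_4^2$ and $d$ in $B_5/\langle z\rangle$ generate a free group of rank two; as $K$ is two-generated and surjects onto $F_2$, Hopficity of $F_2$ forces $K\cong F_2$. This step is the main obstacle: correctly identifying $d$ as the point-push of $p_5$ around all four of $p_1,\dots,p_4$ (rather than a twist of just two strands), verifying $i(b,c)\ge 2$, and handling the central factor cleanly enough to apply the criterion.

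Next I would establish commutativity of the two factors, for which it suffices to check the generators. The relations $[\sigma_4^2,\sigma_1^4]=[\sigma_4^2,\sigma_2^4]=1$ are immediate from far commutativity, since $\sigma_4$ commutes with $\sigma_1$ and $\sigma_2$ (as $|4-1|,|4-2|\ge 2$); geometrically $\sigma_4$ is supported near $\{p_4,p_5\}$, disjoint from $\{p_1,p_2,p_3\}$. For $[d,\sigma_1^4]=[d,\sigma_2^4]=1$ I would use the support description above: $H$ is supported in a small disc around $\{p_1,p_2,p_3\}$ contained in the interior of $c$, whereas $d$, being the point-push of $p_5$, is supported in an annular neighbourhood of a loop running close to $c$ and hence disjoint from that interior disc; disjointly supported mapping classes commute. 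Thus every generator of $H$ commutes with every generator of $K$.

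Finally, because $H$ and $K$ commute, the natural surjection $H\times K\to\langle H\cup K\rangle$ has kernel isomorphic to the diagonal copy of $H\cap K$, so it remains to prove $H\cap K=\{1\}$. For this I would use the strand-forgetting retraction $\rho\colon P_5\to P_4$ obtained by deleting the fifth strand, which is a homomorphism by the Fadell--Neuwirth fibration. Both generators of $K$ become trivial under $\rho$, since all of their linking involves strand $5$, so $\rho(K)=\{1\}$; on the other hand $H$ involves only strands $1,2,3$, hence lies in the image of the section reintroducing a trivial fifth strand, on which $\rho$ is a left inverse, so $\rho$ is injective on $H$. Consequently any $g\in H\cap K$ satisfies $\rho(g)=1$ with $g\in H$, forcing $g=1$. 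Combining $H\cap K=\{1\}$ with elementwise commutativity yields the internal direct product $\langle\sigma_1^4,\sigma_2^4\rangle\times\langle\sigma_4^2,d\rangle\cong F_2\times F_2$ inside $B_5$.
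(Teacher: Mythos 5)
The paper offers no proof of this lemma at all: it is imported verbatim from \cite{BezverkhnijD99} and used as a black box, so there is no internal argument to compare yours against. Taken on its own terms, your proposal is a correct and genuinely different route --- a self-contained mapping-class-group argument in place of a citation. The decomposition into (i) freeness of each factor, (ii) elementwise commutation, and (iii) trivial intersection is the right one, and the ingredients are sound: $\sigma_4^2$ is the Dehn twist $T_b$ about a curve $b$ enclosing $p_4,p_5$; the curve $c$ enclosing $p_1,\dots,p_4$ satisfies $i(b,c)=2$, so the two-twist freeness criterion (Ishida/Hamidi-Tehrani) plus Hopficity of $F_2$ gives $K\cong F_2$; far commutativity and disjointness of supports give $[H,K]=1$; and the strand-forgetting homomorphism $P_5\to P_4$ kills both generators of $K$ while restricting injectively to $H$, so $H\cap K=\{1\}$ and the internal direct product criterion applies. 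What this buys over the paper is an actual proof; the price is reliance on two nontrivial imported theorems (the Dehn-twist criterion and the Fadell--Neuwirth splitting) where \cite{BezverkhnijD99} argues by other means. Two small additions: \rlem{BraidsEnc} gives freeness of $\langle\sigma_1^4,\sigma_2^4\rangle$ inside $B_3$, so you should also invoke the (standard but not free) injectivity of the canonical map $B_3\to B_5$ to transport it.

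The one step that does not hold up as written is the identification of $d$. You compute the pairwise linking numbers of $d$ and conclude that $d$ \emph{is therefore} the point-push of $p_5$ around $\{p_1,\dots,p_4\}$; but linking numbers only determine a pure braid up to the commutator subgroup of $P_5$, so this inference is invalid in general --- you are reading off an element of $P_5$ from its image in $P_5^{\mathrm{ab}}$. The conclusion is nevertheless true and is patched by one explicit identity: $\Delta_5^2=\Delta_4^2\cdot(\sigma_4\sigma_3\sigma_2\sigma_1^2\sigma_2\sigma_3\sigma_4)$, i.e.\ $d=\Delta_4^{-2}\Delta_5^2$, which follows by induction from the standard recursion $\Delta_n^2=\Delta_{n-1}^2\,(\sigma_{n-1}\cdots\sigma_1^2\cdots\sigma_{n-1})$ for the full twist. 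This single identity delivers everything you need at once: geometrically $\Delta_5^2$ is the central boundary twist $z$ and $\Delta_4^2=T_c$, so $d=zT_c^{-1}$ is exactly the point-push/twist description required for the freeness of $K$; algebraically $d$ commutes with $\sigma_1,\sigma_2,\sigma_3$ because $\Delta_5^2$ is central in $B_5$ and $\Delta_4^2$ is central in $\langle\sigma_1,\sigma_2,\sigma_3\rangle$; and forgetting the fifth strand sends $\Delta_5^2\mapsto\Delta_4^2$, hence $d\mapsto 1$. With that substitution your argument is complete.
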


We can prove the undecidability of the identity problem and the group problem by relying on 
the embedding from $B_5$ into the direct product of two free groups.

\begin{theorem}
The identity problem and the group problem are undecidable for braids in $B_5$. 
\end{theorem}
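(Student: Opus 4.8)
The plan is to reduce from the \emph{Identity Correspondence Problem} (ICP), which is known to be undecidable \cite{BellP12}: given a finite set of pairs of words $\{(u_i,v_i) \mid 1 \le i \le n\}$ over a binary group alphabet (so each $u_i,v_i$ lies in $\fgr(\{a,b\})$), decide whether there is a nonempty index sequence $i_1,\ldots,i_m$ with $u_{i_1}\cdots u_{i_m} = \varepsilon$ and $v_{i_1}\cdots v_{i_m} = \varepsilon$ \emph{simultaneously}. Equivalently, the ICP is precisely the identity problem for a finitely generated subsemigroup of the direct product $\fgr(\{a,b\}) \times \fgr(\{a,b\})$, with the pair $(u_i,v_i)$ acting as a generator. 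If a given ICP instance is stated over a larger group alphabet, I would first pass to a binary alphabet using the monomorphism $\alpha$ of \rlem{groupEnc}.

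First I would fix the embedding supplied by \rlem{product}. Define two isomorphisms onto free subgroups of rank two, namely $\phi_1 \colon \fgr(\{a,b\}) \to \langle \sigma_1^4, \sigma_2^4 \rangle$ by $a \mapsto \sigma_1^4$, $b \mapsto \sigma_2^4$, and $\phi_2 \colon \fgr(\{a,b\}) \to \langle \sigma_4^2, d \rangle$ by $a \mapsto \sigma_4^2$, $b \mapsto d$, where $d = \sigma_4 \sigma_3 \sigma_2 \sigma_1^2 \sigma_2 \sigma_3 \sigma_4$. By \rlem{product} both maps are injective onto free groups of rank two, their images commute elementwise, and together they generate the internal direct product $\langle \sigma_1^4, \sigma_2^4 \rangle \times \langle \sigma_4^2, d \rangle \le B_5$; in particular the intersection of the two images is the trivial braid.

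Next, given an ICP instance $\{(u_i,v_i)\}$, I would define braids $\beta_i = \phi_1(u_i)\,\phi_2(v_i) \in B_5$ and set $B = \{\beta_1,\ldots,\beta_n\}$. Because the two factors commute, for every nonempty sequence $i_1,\ldots,i_m$ one has
\[
\beta_{i_1} \cdots \beta_{i_m} = \phi_1(u_{i_1}\cdots u_{i_m})\,\phi_2(v_{i_1}\cdots v_{i_m}),
\]
and since the two free subgroups meet only in the trivial braid, this composition is trivial if and only if $u_{i_1}\cdots u_{i_m} = \varepsilon$ and $v_{i_1}\cdots v_{i_m} = \varepsilon$ hold together in $\fgr(\{a,b\})$. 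Hence $\langle B \rangle$ contains the trivial braid as a nonempty composition exactly when the ICP instance has a solution, so the identity problem in $B_5$ is undecidable.

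Finally, the group problem follows immediately, since the identity problem and the group problem are computationally equivalent \cite{ChoffrutK10} as recalled in \rse{notSec}; I would simply transport the reduction through this equivalence to conclude undecidability of the group problem for $B_5$ as well. I expect the main obstacle to lie not in the braid-theoretic part, which \rlem{product} hands to us, but in pinning down the correct source problem: the decisive point is that the \emph{direct product} structure forces a braid composition to be trivial precisely when \emph{both} free-group coordinates vanish, which is exactly the two-sided triviality demanded by the ICP — a single free group factor would not suffice, as its identity problem is decidable.
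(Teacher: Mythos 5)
Your proposal is correct and follows essentially the same route as the paper: reduce from the Identity Correspondence Problem, embed the two word coordinates into the commuting free subgroups $\langle \sigma_1^4,\sigma_2^4\rangle$ and $\langle \sigma_4^2, d\rangle$ of \rlem{product}, and observe that a product of the resulting braids is trivial iff both free-group coordinates vanish. Your remark that the direct-product structure (trivial intersection of the two factors) is the decisive point is in fact a slightly more careful justification than the paper's terse ``since $\psi$ and $\phi$ are injective homomorphisms.''
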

\begin{proof}

Bell and Potapov~\cite{BellP10} has proven the undecidability of the {\em identity correspondence problem} (ICP)
which asks whether a finite set of pairs of words (over a group alphabet) 
can generate an identity pair by a sequence of concatenations. 
Let $\Sigma = \{a,b\}$ be a binary alphabet and 
$\Pi = \{(s_1, t_1), (s_2, t_2), \ldots, (s_m, t_m)\} \subseteq \fgr(\Sigma) \times \fgr(\Sigma).$
Formally speaking, the ICP is to determine if there exists a nonempty finite sequence of
indices $l_1, l_2, \ldots, l_k$ where $1 \leq l_i \leq m$ such that 
$
s_{l_1}s_{l_2}\cdots s_{l_k} = t_{l_1}t_{l_2} \cdots t_{l_k} = \varepsilon,
$
where $\varepsilon$ is the empty word (identity).

We can directly use the \rlem{product} to encode the ICP in terms of braid words.
We shall use a straightforward encoding to embed an instance of the
ICP into a set of braids.
Let $\Pi \subseteq \Sigma^* \times \Sigma^*$ be an instance of the ICP
where $\Sigma = \{a,b,a^{-1},b^{-1}\}$ generates a free group. 
Define two morphisms $\phi$ and $\psi$ that map $\Sigma$ into $B_5$ as follows:
\[
\begin{aligned}
\phi(a) &= {\sigma_1}^4, & \phi(a^{-1}) &= {\sigma_1}^{-4} , \\
\phi(b) &= {\sigma_2}^4, &\phi(b^{-1}) &= {\sigma_2}^{-4}. \\
\psi(a) &= {\sigma_4}^2, & \psi(a^{-1}) &= {\sigma_4}^{-2}, \\
\psi(b) &=  \sigma_4 \sigma_3 \sigma_2 \sigma_1^2 \sigma_2 \sigma_3 \sigma_4, &
\psi(b^{-1}) &= \sigma_4^{-1} \sigma_3^{-1} \sigma_2^{-1} \sigma_1^{-2} \sigma_2^{-1} \sigma_3^{-1} \sigma_4^{-1}.
\end{aligned}
\]
The domain of $\phi$ and $\psi$ can be naturally extended to words as follows:
$$\phi (w_1  \ldots w_i)= \phi (w_1)  \cdot \ldots \cdot \phi (w_i); \hspace{0.3cm}
\psi (v_1  \ldots v_j)= \psi (v_1) \cdot  \ldots \cdot \psi (v_j),$$
where $w_1 \cdots w_i, v_1 \cdots v_j \in \Sigma^*$.
For each pair of words $(s,t) \in W$, define the
braid word $\phi(s) \cdot \psi(t)$. 
Let $S$ be a braid semigroup
generated by these braid words. In other words, $S$ is finitely generated 
by the set~$\{ \phi(s) \cdot \psi(t) \mid (s,t) \in \Pi\}.$
If there exists a solution to the ICP, then we
see that $\phi(\varepsilon) \cdot \psi(\varepsilon) = 1 \in S,$
where $1$ is the trivial braid. Otherwise, the trivial braid does not exist in the 
braid semigroup~$S$ since
$\psi$ and $\phi$ are injective homomorphisms.
Therefore, we have that the problem whether a trivial braid can be expressed by any finite length composition of braids from $B_5$ is undecidable. 

The identity problem is also computationally equivalent to the following problem which is called 
the group problem. Given a semigroup generated by a finite set of pairs of words (over a group alphabet), 
can we decide whether the semigroup is a group?
Using the same morphisms $\phi$ and $\psi$, we can encode the group problem for words by braids, having 
that the group problem for braids in $B_5$ is also undecidable.
\end{proof}

Similarly, we also prove that the freeness problem is undecidable in $B_5$.

\begin{theorem}
The freeness problem for braids from the braid group~$B_5$ is undecidable.
\end{theorem}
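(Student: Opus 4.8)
The plan is to reduce the freeness problem for braids in $B_5$ from an undecidable freeness problem for pairs of words, exploiting the same embedding machinery from \rlem{product} that powered the identity problem. The natural source of undecidability is the \emph{freeness problem for pairs of words} (equivalently, for $2 \times 2$ integer matrix semigroups or for the direct product of two free groups), which is known to be undecidable. Specifically, given a finite set $\Pi = \{(s_1,t_1), \ldots, (s_m,t_m)\} \subseteq \fgr(\Sigma) \times \fgr(\Sigma)$ with $\Sigma = \{a,b\}$, it is undecidable whether the semigroup generated by $\Pi$ inside $\fgr(\Sigma) \times \fgr(\Sigma)$ is free. The whole point of \rlem{product} is that $\langle \sigma_1^4, \sigma_2^4 \rangle \times \langle \sigma_4^2, d \rangle$ sits inside $B_5$ as an honest direct product of two rank-2 free groups, so this undecidable question can be transported verbatim into $B_5$.

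First I would reuse the two morphisms $\phi$ and $\psi$ defined just above: $\phi$ maps $\Sigma$ into the first free factor $\langle \sigma_1^4, \sigma_2^4 \rangle$ and $\psi$ maps $\Sigma$ into the second free factor $\langle \sigma_4^2, d \rangle$, where $d = \sigma_4 \sigma_3 \sigma_2 \sigma_1^2 \sigma_2 \sigma_3 \sigma_4$. To each pair $(s_i, t_i) \in \Pi$ I associate the braid generator $\gamma_i = \phi(s_i)\,\psi(t_i) \in B_5$, and I set $B = \{\gamma_1, \ldots, \gamma_m\}$. The key structural fact is that because the two free factors commute elementwise and generate a direct product, the braid corresponding to an arbitrary product is
\[
\gamma_{l_1}\gamma_{l_2}\cdots\gamma_{l_k} = \phi(s_{l_1}s_{l_2}\cdots s_{l_k})\,\psi(t_{l_1}t_{l_2}\cdots t_{l_k}),
\]
that is, the $\phi$-parts and $\psi$-parts separate cleanly. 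Since $\phi$ and $\psi$ are injective homomorphisms onto generators of a direct product, two products $\gamma_{l_1}\cdots\gamma_{l_k}$ and $\gamma_{l'_1}\cdots\gamma_{l'_{k'}}$ are equal as braids in $B_5$ if and only if $s_{l_1}\cdots s_{l_k} = s_{l'_1}\cdots s_{l'_{k'}}$ and $t_{l_1}\cdots t_{l_k} = t_{l'_1}\cdots t_{l'_{k'}}$ simultaneously in $\fgr(\Sigma)$, which is precisely the condition for equality of the corresponding products of pairs in $\Pi$.

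From this equivalence the reduction is immediate: the braid semigroup $\langle B \rangle$ is free if and only if the semigroup generated by $\Pi$ in $\fgr(\Sigma) \times \fgr(\Sigma)$ is free. I would conclude by invoking the undecidability of the freeness problem for such pairs of words; since the map $\Pi \mapsto B$ is clearly computable, freeness in $B_5$ inherits the undecidability. The main obstacle — and the step deserving the most care — is establishing the ``if and only if'' above, namely that equality and distinctness of products are \emph{faithfully} reflected by the embedding. This rests entirely on \rlem{product}, which guarantees both that each factor is free of rank $2$ (so $\phi$ and $\psi$ are injective) and that the subgroup is an internal direct product (so the two coordinates are logically independent and cannot interfere via braid relations). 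I would be careful to phrase freeness in terms of distinct index sequences yielding distinct braids, matching the definition of a free semigroup, rather than in terms of the weaker non-identity condition used for the identity problem. If the cited undecidability result is stated only for matrix semigroups, a brief remark translating between the matrix and free-group-pair formulations via the standard embedding $\fgr(\{a,b\}) \hookrightarrow \SL(2,\Zed)$ would close the gap.
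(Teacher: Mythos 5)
Your reduction is correct in substance and leans on exactly the same structural machinery as the paper: the embedding of $\fgr(\{a,b\}) \times \fgr(\{a,b\})$ into $B_5$ via \rlem{product} and the morphisms $\phi$ and $\psi$, together with the observation that products separate into their two coordinates so that equality and distinctness of products are faithfully reflected. The difference is the choice of source problem. You reduce from the freeness problem for finitely generated semigroups of \emph{pairs of words}, whereas the paper reduces directly from the mixed modification PCP (MMPCP) of Cassaigne, Harju and Karhum\"aki, taking as braid generators $\gamma(a_i, h(a_i))$ and $\gamma(a_i, g(a_i))$; there, distinctness of the first coordinates forces any two equal products to use the same letter sequence $a_{i_1}\cdots a_{i_k}$, so a non-freeness witness is precisely an MMPCP solution. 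The two routes are essentially equivalent, since the undecidability of freeness for pairs-of-words semigroups is itself proved via the MMPCP; your version is one citation higher up the chain and arguably cleaner, while the paper's version is more self-contained. One small caution on your side: the standard undecidable freeness result is stated for pairs of words over free \emph{monoids} (equivalently, upper-triangular integer matrices), not over free groups, so the bridging remark you need is not the embedding $\fgr(\{a,b\}) \hookrightarrow \SL(2,\Zed)$ (which goes in the wrong direction) but the observation that positive words are equal in the free group if and only if they are identical, so a positive-word instance is free in $\Sigma^* \times \Sigma^*$ if and only if it is free in $\fgr(\Sigma) \times \fgr(\Sigma)$. With that one-line fix your argument closes.
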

\begin{proof}
We first introduce the {\em mixed modification PCP} (MMPCP)~\cite{CassaigneKH96} 
which is already proven to be undecidable 
and prove the undecidability of the freeness problem in $B_5$
by encoding an instance of the MMPCP.

Given a finite alphabet $\Sigma$, a binary alphabet $\Delta$, and a pair of homomorphisms 
$h,g : \Sigma^* \to \Delta^*$, the MMPCP asks to decide whether or not there exists 
a word~$w = a_1\ldots a_k \in \Sigma^+, a_i \in \Sigma$ such that 
\[
h_1(a_1) h_2(a_2) \ldots h_k(a_k) = g_1(a_1) g_2 (a_2) \ldots g_k(a_k),
\]
where $h_i, g_i \in \{ h,g\}$ and for some $j \in [1,k]$ such that $h_j \ne g_j$.

Let $\Sigma = \{a_1, a_2, \ldots, a_{n-2}\}$ and $\Delta = \{ a_{n-1}, a_n\}$ be disjoint 
alphabets and $h,g : \Sigma^* \to \Delta^*$ be an instance of the MMPCP.
Now define a morphism~$\gamma : (\Sigma \cup \Delta)^* \times (\Sigma \cup \Delta)^*  
\to B_5$ by
\[
\gamma(u,v) = \phi(u) \cdot \psi(v).
\]
It is easy to see that $\gamma$ is a homomorphism since $\gamma(u_1, v_1) \gamma(u_2,v_2) = 
\gamma(u_1u_2, v_1v_2)$. Now let $S$ be a braid semigroup which is finitely generated by 
the set~$ \{ \gamma(a_i, h(a_i)), \gamma(a_i, g(a_i)) \mid a_i \in \Sigma, 1 \le i \le n-2\}$. 
The braid semigroup~$S$ is not free if and only if the MMPCP instance has a solution. 
Since the MMPCP is undecidable, we conclude that the freeness problem in the braid group~$B_5$ 
is also undecidable.
\end{proof}

\section{Conclusion}
The paper introduces a few challenging algorithmic problems about topological braids 
opening new connections between braid groups, combinatorics on words, complexity theory and 
provides solutions for some of these problems by application of several 
techniques from automata theory, matrix semigroups and algorithms.

We have shown that the membership problem for $B_3$ is decidable and actually $\NP$-complete. 
The $\NP$-hardness result is in line with the best current knowledge about similar problem in 
the special linear group~${\rm SL}(2,\Zed)$. W
Finally in this paper we have proven that fundamental problems about the braid compositions
are undecidable for braids with at least $5$ strands, but decidability of these problems for $B_4$ remains open.
\section*{Acknowledgements}
The second author is grateful for many fruitful discussions with Sergei Chmutov and Victor Goryunov on the computational problems in topology.

\bibliographystyle{abbrv} 
\bibliography{braid}

\end{document}